\newtheorem{theorem}{Theorem}[section]
\newtheorem{lemma}{Lemma}[section]
\newtheorem{proposition}{Proposition}[section]
\theoremstyle{definition}
\theoremstyle{remark}
\newtheorem{remark}{Remark}[section]
\theoremstyle{definition}
\newtheorem{problem}{Problem}
\numberwithin{equation}{section}
\newcommand\R{{\mathbb R}}
\newcommand\X{{\R^d}}
\newcommand\N{{\mathbb N}}
\newcommand\E{{\mathbb E}}
\newcommand\D[1][{}]{{\mathfrak D}^{^{#1}}}
\newcommand\DC{{\mathfrak D}^c}
\newcommand\K{{\mathcal K}}
\newcommand\M{{\mathcal M}}
\newcommand\B{{\mathcal B}}
\newcommand\Ga{\Gamma}
\newcommand\ga{\gamma}
\newcommand\La{\Lambda}
\newcommand\la{\lambda}
\newcommand\1{{1 \!\! 1}}
\newcommand{\step}[1]{\par${}\quad\mathbf{#1)}\quad$}
\newcommand\LC{L_{\mathtt{CM}}}
\newcommand\Lint{L_{\mathtt{int}}}
\newcommand\hLint{\hat{L}_{\mathtt{int}}}
\newcommand\dist{{\mathrm{dist}}}
\begin{document}

\title{On two-component contact model in continuum with~one~independent component}

\author{\textbf{Denis O.~Filonenko} \\
{\small National University ``Kyiv-Mohyla Academy''; 2 Skovoroda
str., 03070, Kyiv, Ukraine}\\ {\small denfil@ukr.net} \and \textbf{Dmitri L.~Finkelshtein} \\
{\small Institute of Mathematics, National Academy of Sciences of
Ukraine, Kiev, Ukraine}\\ {\small fdl@imath.kiev.ua} \and
\textbf{Yuri G.~Kondratiev} \\ {\small Fakult\"at f\"ur Mathematik,
Universit\"at Bielefeld, D 33615 Bielefeld, Germany}\\ {\small
Forschungszentrum BiBoS, Universit\"at Bielefeld, D 33615 Bielefeld, Germany}\\
{\small National University ``Kyiv-Mohyla Academy'', Kiev, Ukraine}\\
{\small kondrat@mathematik.uni-bielefeld.de}}

\maketitle

\noindent \textbf{MSC Classification:}{ 60K35, 82C22 (Primary);
60J75, 60J80 (Secondary)}

\noindent \textbf{Keywords:}{ Asymptotic behavior; configuration
space; contact model; non-equilibrium Markov process}

\begin{abstract}
Properties of a contact process in continuum for a system of two
type particles one type of which is independent are considered. We
study dynamics of the first and second order correlation functions,
their asymptotics and dependence on parameters of the~system.
\end{abstract}

\newpage

\section{Preliminaries}
The configuration space $\Ga :=\Ga _{\R^d}$ over $\R^d$,
$d\in\N$, is defined as the set of all locally finite subsets of
$\R^d$,
\begin{equation}
\Ga :=\left\{ \ga \subset \R^d:\left| \ga_\La\right|
<\infty \text{ for every compact }\La\subset \R^d\right\} ,
\end{equation}
where $\left| \cdot \right|$ denotes the cardinality of a set and
$\ga_\La := \ga \cap \La$. As usual we identify each
$\ga \in \Ga $ with the non-negative Radon measure $\sum_{x\in
\ga }\delta_x\in \mathcal{M}(\R^d)$, where $\delta_x$ is
the Dirac measure with unit mass at $x$,
$\sum_{x\in\varnothing}\delta_x$ is, by definition, the zero measure,
and $\mathcal{M}(\R^d)$ denotes the space of all
non-negative Radon measures on the Borel $\sigma$-algebra
$\mathcal{B}(\R^d)$. This identification allows to endow
$\Ga $ with the topology induced by the vague topology on
$\mathcal{M}(\R^d)$, i.e., the weakest topology on $\Ga$
with respect to which all mappings
\begin{equation*}
\Ga \ni \ga \longmapsto \langle f,\ga\rangle :=
\int_{\R^d}f(x)d\ga(x)=\sum_{x\in \ga }f(x),\quad
f\in C_0(\R^d),
\end{equation*}
are continuous. Here $C_0(\R^d)$ denotes the set of all continuous functions
on $\R^d$ with compact support. We denote by $\mathcal{B}(\Ga )$ the
corresponding Borel $\sigma$-algebra on $\Ga$.

Let us now consider the space of finite configurations
\begin{equation*}
\Ga_0 := \bigsqcup_{n=0}^\infty \Ga^{(n)},
\end{equation*}
where $\Ga^{(n)} := \Ga^{(n)}_{\R^d} := \{ \ga\in \Ga:
\vert \ga\vert = n\}$ for $n\in \N$ and $\Ga^{(0)} :=
\{\varnothing\}$. For $n\in \N$, there is a natural bijection between
the space $\Ga^{(n)}$ and the symmetrization
$\widetilde{(\R^d)^n}\diagup S_n$ of the set $\widetilde{(\R^d)^n}:=
\{(x_1,...,x_n)\in (\R^d)^n: x_i\not= x_j \hbox{ if } i\not= j\}$
under the permutation group $S_n$ over $\{1,...,n\}$ acting on
$\widetilde{(\R^d)^n}$ by permuting the coordinate indexes. This
bijection induces a metrizable topology on $\Ga^{(n)}$, and we
endow $\Ga_0$ with the topology of disjoint union of topological
spaces. By $\mathcal{B}(\Ga^{(n)})$ and $\mathcal{B}(\Ga_0)$
we denote the corresponding Borel $\sigma$-algebras on
$\Ga^{(n)}$ and $\Ga_0$, respectively.

Given a constant $z>0$, let $\la_z$ be the
Lebesgue-Poisson measure
$$
\la_z:=\sum_{n=0}^\infty \frac{z^n}{n!} m^{(n)},
$$
where each $m^{(n)}$, $n\in \N$, is the image measure on $\Gamma^{(n)}$ of
the product measure $dx_1...dx_n$ under the mapping
$\widetilde{(\R^d)^n}\ni (x_1,...,x_n)\mapsto\{x_1,...,x_n\}\in \Gamma^{(n)}$.
For $n=0$ we set $m^{(0)}(\{\varnothing\}):=1$.

We proceed to consider the $K$-transform \cite{Le73}, \cite{Le75a},
\cite{Le75b}, \cite{KoKu99}, that is, a mapping which maps functions
defined on $\Ga_0$ into functions defined on the space $\Ga$. Let
$\mathcal{B}_c(\R^d)$ denote the set of all bounded Borel sets in
$\R^d$, and for any $\La\in \mathcal{B}_c(\R^d)$ let $\Ga_\La :=
\{\eta\in \Ga: \eta\subset \La\}$. Evidently $\Ga_\La =
\bigsqcup_{n=0}^\infty \Ga_\La^{(n)}$, where $\Ga_\La^{(n)}:=
\Ga_\La \cap \Ga^{(n)}$ for each $n\in \N_0$, leading to a situation
similar to the one for $\Ga_0$, described above. We endow $\Ga_\La$
with the topology of the disjoint union of topological spaces and
with the corresponding Borel $\sigma$-algebra
$\mathcal{B}(\Ga_\La)$.

Given a $\mathcal{B}(\Ga_0)$-measurable function $G$ with local
support, that is, $G\!\!\upharpoonright
_{\Ga\setminus\Ga_\La}\equiv 0$ for some $\La \in
\mathcal{B}_c(\R^d)$, the $K$-transform of $G$ is a mapping
$KG:\Ga\to\R$ defined at each $\ga\in\Ga$ by
\begin{equation}
(KG)(\ga ):=\sum_{\eta \Subset \ga }
G(\eta ),
\label{Eq2.9}
\end{equation}
where $\eta\Subset\ga$ means that $\eta\subset\ga$ and $\vert\eta\vert < \infty$. Note that for every such function $G$ the sum in (\ref{Eq2.9}) has only a
finite number of summands different from zero, and thus $KG$ is a well-defined
function on $\Ga$. Moreover, if $G$ has support described as before, then
the restriction $(KG)\!\!\upharpoonright _{\Ga _\La }$ is a
$\mathcal{B}(\Ga_\La)$-measurable function and
$(KG)(\ga)=(KG)\!\!\upharpoonright _{\Ga _\La }\!\!(\ga_\La)$
for all $\ga\in\Ga$, i.e., $KG$ is a cylinder function.

Let now $G$ be a bounded $\mathcal{B}(\Ga_0)$-measurable function
with bounded support, that is, $G\!\!\upharpoonright _{\Ga
_0\backslash \left(\bigsqcup_{n=0}^N\Ga _\La ^{(n)}\right)
}\equiv 0$ for some $N\in\N_0, \La \in \mathcal{B}_c(\R^d)$. In
this situation, for each $C\geq \vert G\vert$ one finds $\vert
(KG)(\ga)\vert\leq C(1+\vert\ga_\La\vert)^N$ for all
$\ga\in\Ga$. As a result, besides the cylindricity property,
$KG$ is also polynomially bounded. In the sequel we denote the space
of all bounded $\mathcal{B}(\Ga_0)$-measurable functions with
bounded support by $B_{bs}(\Ga_0)$. It has been shown in
\cite{KoKu99} that the $K$-transform is a linear isomorphism which
inverse mapping is defined on cylinder functions by
\begin{equation}
\left( K^{-1}F\right) (\eta ):=\sum_{\xi \subset \eta }(-1)^{|\eta
\backslash \xi |}F(\xi ),\quad \eta \in \Ga _0.
\end{equation}

\section{The description of problem and main results}
\subsection{Basic facts and notations}
Two-component contact process in $\R^d$ describes
a birth-and-death stochastic dynamics of
a infinite system of two type particles.
Such system may be interpreted as pair of
configurations in $\R^d$ as well as one configuration of marked particles
that means that each particle has mark (spin) $+1$ or $-1$. The first interpretation sometimes
is more useful but we should additionally
assume that these two configurations don't
interact.

Let us give the rigorous definitions. Consider
two copies of the space $\Ga$: $\Ga^+$
and $\Ga^-$. Let
\begin{equation}
\Ga^2:=\Big\{(\ga^+,\ga^-)\in\Ga^+\times\Ga^-
 :  \ga^+\cap\ga^-=\varnothing\Big\}.
\end{equation}
Any configuration $\ga:=(\ga^+,\ga^-)\in\Ga^2$
may be identified with marked configuration
\begin{equation*}
\hat{\ga}= \big\{(x,\sigma_x) : x\in\ga^+\cup\ga^-,
\sigma_x=\1_{x\in\ga^+}-\1_{x\in\ga^-}\big\}\in\hat{\Ga},
\end{equation*}
since $\ga^+\sqcup\ga^-\in\Ga$. Here $\hat{\Ga}$
is the space of all marked configurations
in $\R^d$ with marks equal to $\pm 1$. One can induce topology on $\Ga^2$
from the weakest topology on $\hat{\Ga}$ such that all functions
\begin{equation*}
\hat{\Ga}\ni\hat{\ga}\longmapsto \sum_{(x,\sigma_x)\in\hat{\ga}}\hat{
f}((x,\sigma_x))\in\R
\end{equation*}
are continuous for all $\hat{f}\in C_0(\R^d\times\{-1;1\})$.
Clearly, in this induced topology on $\Ga^2$
all functions
\begin{equation*}
\Ga^2\ni\ga=(\ga^+,\ga^-)\longmapsto \sum_{x\in\ga^+}f(x)
+ \sum_{y\in\ga^-}g(y)\in\R
\end{equation*}
will be continuous for any $f,g\in C_0(\R^d)$.

On the other hand this topology may be induced from the topology on product
$\Ga^+\times\Ga^-$. Let $\B(\Ga^2):=\B(\Ga^+)\times\B(\Ga^-)$ be the~corresponding $\sigma$-algebra.

Let us now consider the space of finite configurations. Consider
two copies of the space $\Ga_0$: $\Ga^+_0$
and $\Ga^-_0$. Let
\begin{equation}
\Ga^2_0:=\Big\{(\eta^+,\eta^-)\in\Ga^+_0\times\Ga^-_0 : \eta^+\cap\eta^-=\varnothing\Big\}.
\end{equation}
Again one can consider the topology on $\Ga_0^2$ induced by the product-topology. By $\B(\Ga^2_0):=\B(\Ga^+_0)\times\B(\Ga^-_0)$ we denote
the~corresponding $\sigma$-algebra.

We will say that a function $G:\Ga_0^2\to\R$ is a bounded function
with bounded support if for any $(\eta^+,\eta^-)\in\Ga_0^2$
\begin{equation*}
G(\cdot,\eta^-)\in B_{bs}(\Ga_0^+), \quad
G(\eta^+,\cdot)\in B_{bs}(\Ga_0^-).
\end{equation*}
Class of all such functions we denote by $B_{bs}(\Ga_0^2)$.

For any $G\in B_{bs}(\Ga_0^2)$ one can define the $\K$-transform of $G$ as mapping $\K G:\Ga^2\to\R$ defined at each $\ga=(\ga^+,\ga^-)\in\Ga^2$ by
\begin{equation}
(\K G)(\ga)=\sum_{\substack{\eta^+\Subset\ga^+ \\ \eta^-\Subset\ga^-}}
G(\eta^+,\eta^-).
\end{equation}
On the other hand if $\1^\pm$ are unit operators on functions on $\Ga^\pm_0$
and $K^+:=K\otimes\1^-$, $K^-:=\1^+\otimes K$ then
\begin{equation*}
\K=K^+K^-=K^-K^+.
\end{equation*}
Hence, $\K G <\infty$ and $\K G$ is cylinder function on both variables.

Moreover,
%if $G\!\!\upharpoonright _{\Ga^2
%_0\setminus \left(
%\bigsqcup_{n=0}^N \Ga _\La ^{(n)}\times \Ga _\La ^{(n)}
%\right)}
%\equiv 0$
$\K G$ is polynomially bounded: for the proper $C>0$, $\La\in\B_c(\R^d)$, $N\in\N$
\begin{equation*}
\vert (\K G)(\ga) \vert \leq C (1+\vert \ga^+_\La \vert)^N (1+\vert \ga^-_\La \vert)^N.
\end{equation*}
The inverse mapping is defined
on cylinder (on both variables) functions by
\begin{equation}
(\K^{-1} F)(\eta):= \sum_{\substack{\xi^+\subset\eta^+ \\ \xi^-\subset\eta^-}}
(-1)^{\vert \eta^+\setminus\xi^+\vert + \vert \eta^-\setminus\xi^-\vert}F(\xi^+,\xi^-), \quad \eta=(\eta^+,\eta^-)\in
\Ga^2_0.
\end{equation}

Let $\mu$ be a probability measure on
$\Big(\Ga^2,\B\bigl(\Ga^2\bigr)\Big)$ (we denote class of the all such measures by $\M^1\bigl(\Ga^2\bigr)$).
The function $k_\mu:\Ga^2_0\to\R$ is called a~{\it correlation
function} of~the~measure $\mu$ if for~any $G\in B_{bs}(\Ga^2_0)$
\begin{equation}
\int_{\Ga^2} (\K G)(\ga) d\mu(\ga) =
\int_{\Ga^2_0} G(\eta^+,\eta^-) k_\mu(\eta^+,\eta^-)
d\la_1(\eta^+) d\la_1(\eta^-).
\end{equation}

\subsection{Description of model}
Let us consider the generator $L$ of two-component contact process with one independent component. This generator is well-defined at least on cylindric
functions on $\Ga^2$ and has the~following form:
\begin{equation}
L=\LC ^{+}+\LC ^{-}+\Lint ^{+}.
\end{equation}
Here $\LC ^{+}$ is the generator of the one-component contact model of
($+$)-system, $\LC ^{-}$ is the analogous generator of ($-$)-system, $\Lint ^{+}$ is interaction term that describes birth of ($+$%
)-particles under influence of ($-$)-particles. Namely,
\begin{align*}
\left( \LC ^{+} F\right) ( \ga ^{+},\ga ^{-}) &=\sum_{x\in \ga
^{+}}\left[ F\left( \ga ^{+}\setminus x,\ga ^{-}\right)
-F( \ga ^{+},\ga ^{-}) \right] \\
&\quad+\la ^{+}\int_{\R^{d}}\left( \sum_{x' \in \ga ^{+}}a^{+}\left(
x-x' \right) \right) \left[ F\left(
\ga ^{+}\cup x,\ga ^{-}\right) -F( \ga ^{+},\ga ^{-}) \right] dx,\\
\left( \LC ^{-} F\right) ( \ga ^{+},\ga ^{-}) &=\sum_{y\in \ga
^{-}}\left[ F\left( \ga ^{+},\ga ^{-}\setminus y\right)
-F( \ga ^{+},\ga ^{-}) \right] \\
&\quad +\la ^{-}\int_{\R^{d}}\left( \sum_{y' \in \ga
^{-}}a^{-}\left( y-y' \right) \right) \left[ F\left(
\ga ^{+},\ga ^{-}\cup y\right) -F( \ga ^{+},\ga ^{-}) \right] dy,\\
\left( \Lint ^{+} F\right) ( \ga ^{+},\ga ^{-}) &=\la
\int_{\R^{d}}\left( \sum_{y\in \ga ^{-}}a\left( x-y\right) \right)
\left[ F\left( \ga ^{+}\cup x,\ga ^{-}\right) -F( \ga ^{+},\ga ^{-})
\right] dx.
\end{align*}
Constants $\la^+,\la^-,\la$ are positive, functions $a^+,a^-,a$ are
non-negative, even, integrable and normalised:
\begin{equation*}
\langle a^{+}\rangle = \langle a^{-}\rangle = \langle a \rangle = 1.
\end{equation*}
Here and in the sequel we use the following notation
\begin{equation*}
\langle f \rangle :=\int_{\R^{d}}f(x)dx, \quad f\in L^1(\R^d).
\end{equation*}

We also denote the Fourier
transform of such $f$ as $\hat{f}$:
\begin{equation*}
\hat{f}(p)=\int_\X e^{-i(p,x)}f(x)dx,
\end{equation*}
where $(\cdot,\cdot)$ is a scalar product in $\X$.

Next theorem is the partial case of the results obtained in \cite{FKS}.
\begin{theorem}
Let $d\geq 2$ and there exists constants
$A>0, \delta>2d$ such that
\begin{equation}\label{decayofa}
a^+(x) + a^-(x) + a(x) \leq \frac{A}{(1+\vert x \vert)^\delta}.
\end{equation}
Then there exists a Markov process $X_t$ on $\Ga^2$ with generator $L$.
\end{theorem}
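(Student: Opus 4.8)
The plan is to construct the process through the associated evolution of correlation functions, using the harmonic analysis on $\Ga_0^2$ set up above, and to exploit the triangular structure of $L$. First I would pass to the symbol $\hat{L}:=\K^{-1}L\K$, acting on quasi-observables $G\in B_{bs}(\Ga_0^2)$, and then to its dual $L^{\Delta}$ acting on correlation functions via the pairing $\langle \hat{L}G,k\rangle=\langle G,L^{\Delta}k\rangle$, where $\langle G,k\rangle=\int_{\Ga_0^2}G\,k\,d\la_1\,d\la_1$. A direct computation of $K^{-1}\LC^{\pm}K$ and $K^{-1}\Lint^{+}K$ (using $\K=K^{+}K^{-}$) shows that the death parts contribute the diagonal multiplier $-(\vert\eta^+\vert+\vert\eta^-\vert)$ while the birth parts are integral operators raising the number of particles by one, with $a^{\pm},a$ entering only through convolution kernels. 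The crucial structural point is that the $(-)$-component evolves autonomously: since $\LC^{-}$ depends on $\ga^{-}$ alone, and neither $\LC^{+}$ nor $\Lint^{+}$ alters $\ga^{-}$, the $(-)$-marginal is itself a one-component contact process, so its correlation functions $k^{-}_t(\eta^-):=k_t(\varnothing,\eta^-)$ solve a closed, autonomous hierarchy.

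Second, I would solve this autonomous $(-)$-hierarchy, which is exactly the one-component contact model in continuum. Under an integrability and decay assumption on $a^{-}$ such as (\ref{decayofa}), its correlation-function evolution is known to generate a semigroup on the Banach space of functions with a Ruelle (sub-Poissonian) bound $\vert k(\eta^-)\vert\le\text{const}\cdot C^{\vert\eta^-\vert}$. I would quote this to obtain $k^{-}_t$ together with uniform-in-$t$ sub-Poissonian bounds; here integrability of $a^{-}$ is used to keep the $(-)$-intensity finite.

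Third, with $k^{-}_t$ in hand, the remaining components of $k_t$ (those with $\eta^+\neq\varnothing$) satisfy a \emph{linear}, now non-autonomous, system driven by the known $(-)$-data through $\Lint^{+}$. I would solve it on the scale of Banach spaces $\K_C$ of correlation functions with bound $\vert k(\eta^+,\eta^-)\vert\le\text{const}\cdot C^{\vert\eta^+\vert+\vert\eta^-\vert}$: the death term supplies a strong negative diagonal, while the birth operators coming from $\LC^{+}$ and $\Lint^{+}$ are bounded maps $\K_{C'}\to\K_{C}$ for $C'<C$, with norms controlled by $\langle a^+\rangle$, $\langle a\rangle$ and the supremum of the $(-)$-intensity. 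An Ovsyannikov-type argument on this scale (or a substochastic-semigroup/perturbation argument exploiting the negative diagonal) then yields a solution $k_t$ with a uniform Ruelle bound. The decay (\ref{decayofa}) with $\delta>2d$ guarantees the finiteness of the convolution integrals and the boundedness of the interaction operator.

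Finally, I would identify $k_t$ as the correlation function of a measure $\mu_t\in\M^1(\Ga^2)$ and reconstruct the Markov process. Positivity — that $k_t$ satisfies Lenard's positive-definiteness conditions and so is the correlation function of a genuine probability measure — I would obtain by finite-volume approximation: in a bounded $\La$ the dynamics is an honest jump process on $\Ga_\La^2$ solving the Kolmogorov equations, its correlation functions converge to $k_t$, and positivity passes to the limit. The reconstruction of $X_t$ on $\Ga^2$ from the sub-Poissonian evolution $(\mu_t)$ with pregenerator $L$ is then the general statement borrowed from \cite{FKS}. The main obstacle I expect is global-in-time generation together with positivity preservation for the $(+)$-component: the birth mechanism is the sole source of unboundedness and potential explosion, and one must ensure that the driving $(-)$-intensity, convolved against $a$, keeps the $(+)$-correlation functions — in particular the second-order ones — finite for all $t$, which is precisely where the strength of the decay condition (\ref{decayofa}) is needed.
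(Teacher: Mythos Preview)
The paper does not prove this theorem at all: it is stated as ``the partial case of the results obtained in \cite{FKS}'' and is simply quoted from that reference. So the benchmark you are comparing against is a one-line citation, not an argument.

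Your sketch, by contrast, tries to outline an actual construction. As a strategy it is coherent but it is not the route taken in \cite{FKS} (and its predecessor \cite{KoSk06} for the one-component case): those works build the Markov process \emph{directly} on configuration space by probabilistic/approximation arguments, not by first solving the correlation-function hierarchy and then invoking a Lenard-type reconstruction. Your plan effectively runs in the opposite direction --- from $k_t$ to $\mu_t$ to $X_t$ --- and the last step (``reconstruction of $X_t$ \ldots\ is then the general statement borrowed from \cite{FKS}'') is circular: that is precisely the theorem you are trying to prove. Having a sub-Poissonian evolution of measures $\mu_t$ solving the dual Kolmogorov equation does not, by itself, give you a Markov process on $\Ga^2$; one still needs to construct the path measure, verify the Markov property, and identify $L$ as the generator, which is exactly the content of \cite{FKS}.

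If you want a self-contained argument, the honest route is the direct one: approximate by the jump process on $\Ga_\La^2$ (finite volume, finitely many particles), derive a priori bounds on the birth intensities uniform in $\La$ --- this is where the decay \eqref{decayofa} with $\delta>2d$ enters, to control $\sum_{x'\in\ga^+}a^+(x-x')$ and $\sum_{y\in\ga^-}a(x-y)$ on the support set of admissible configurations --- and pass to the limit in path space. The correlation-function machinery you describe is what the present paper uses \emph{after} the process exists, to study $k_t^{\pm}$ and $k_t^{\pm\pm}$; it is not a substitute for the existence proof.
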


We will always suppose also that
\begin{equation}\label{intoffour}
    \hat{a}, \hat{a}^+, \hat{a}^- \in L^1(\R^d).
\end{equation}

Hence, one has stochastic dynamics of configurations that implies
dynamics of measures, namely $\M^1\bigl(\Ga^2\bigr)\ni\mu_0 \mapsto
\mu_t\in\M^1\bigl(\Ga^2\bigr)$ such that
for any measurable bounded $F:\Ga^2\rightarrow\R$
\begin{equation*}
\int_{\Ga^2}F(\ga)d\mu_t(\ga):= \E \Biggl[\int_{\Ga^2}F(X_t^\ga)d\mu_0(\ga)\Biggr],
\end{equation*}
where process $X_t$ starts from $\ga\in\Ga^2$ (more precisely, $\ga$
belongs to proper support set, see \cite{FKS}).

This dynamics of measures implies dynamics of corresponding
correlation functions (if they exist). For obtain explicit
differential equations for this dynamics we should calculate so-called
descent operator $\hat{L}$ which defined on functions
$G\in B_{bs}(\Ga^2_0)$ by
\begin{equation}
\bigl( \hat{L}G \bigr) ( \eta ) = \bigl( (\K^{-1} L \K) G \bigr) ( \eta ),
\quad \eta\in\Ga^2_0.
\end{equation}
Next we should obtain the adjoint operator $\hat{L}^{\ast}$ (with respect
to measure $d\la_1d\la_1$):
\begin{multline}\label{adjoint}
\int_{\Ga _{0}^{2}}\hat{L}G ( \eta ^{+},\eta ^{-})
k ( \eta ^{+},\eta ^{-}) d\la_1( \eta
^{+}) d\la _1( \eta ^{-}) \\=\int_{\Ga
_{0}^{2}}G ( \eta ^{+},\eta ^{-}) \hat{L}^{\ast } k( \eta
^{+},\eta ^{-}) d\la_1 ( \eta ^{+} ) d\la_1
( \eta ^{-} ) .
\end{multline}
Then equations for time evolution of correlation function
will be following:
\begin{equation}\label{geneq}
\frac{\partial k_{t} ( \eta ^{+},\eta ^{-} ) }{\partial t}=\bigl(
\hat{L}^{\ast }k_{t}\bigr) ( \eta ^{+},\eta ^{-} ) .
\end{equation}

In the present article we concentrate our attention
on the correlation functions of the first and second orders:
\begin{equation}\label{cf12}
\begin{aligned}
k_t^+(x)&:=k_t(\{x\},\varnothing), &\quad x&\in\R^d;\\
k_t^-(y)&:=k_t(\varnothing,\{y\}), &\quad y&\in\R^d;\\
k_t^{++}(x_1,x_2)&:=k_t(\{x_1, x_2\},\varnothing), &\quad x_1, x_2&\in\R^d;\\
k_t^{+-}(x,y)&:=k_t(\{x\},\{y\}), &\quad x, y&\in\R^d;\\
k_t^{--}(y_1,y_2)&:=k_t(\varnothing, \{y_1, y_2\}), &\quad y_1, y_2&\in\R^d.
\end{aligned}
\end{equation}

The main subject for our studying will be explicit
expression for correlation functions of the first and second orders
and their asymptotic at $t\to\infty$.

\subsection{Problems and results}
In this subsection we state main problems and formulate results.
All proofs are presented in the next section.

First two results give explicit forms of~the~equation~\eqref{geneq}
for the~first and second order correlation functions~\eqref{cf12}.
\begin{proposition}\label{propeq1}
For any $x,y\in\R^d$
\begin{align*}
\frac{\partial k_{t}^{-}(y)}{\partial t} &=-k_{t}^{-}(y)+\la
^{-}\int_{\R^{d}}a^{-}(y-y' )k_{t}^{-}(y' )dy' ,
\\
\frac{\partial k_{t}^{+}(x)}{\partial t} &=-k_{t}^{+}(x)+\la
^{+}\int_{\R^{d}}a^{+}(x-x' )k_{t}^{+}(x' )dx' +\la
\int_{\R^{d}}a(x-y)k_{t}^{-}(y)dy
\end{align*}%
\end{proposition}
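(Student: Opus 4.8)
The plan is to compute the right-hand side of the general evolution equation~\eqref{geneq} explicitly at the two one-point arguments $\eta=(\{y\},\varnothing)$ and $\eta=(\{x\},\varnothing)$. By the definition of $\hat{L}^\ast$ through~\eqref{adjoint}, it suffices to know the descent operator $\hat{L}=\K^{-1}L\K$ acting on $B_{bs}(\Ga_0^2)$ and then to read off its adjoint componentwise. Since $L=\LC^++\LC^-+\Lint^+$ splits into three pieces, and since $\K=K^+K^-$ factorizes, I would first treat each summand of $L$ separately, conjugating it by $\K$. The key combinatorial fact is the standard $K$-transform identity that converts the one-component birth-and-death contact generator into an operator on quasi-observables whose structure is local in the number of points; applying it in each component reduces the computation to algebra on finite configurations.

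First I would handle $\LC^-$, whose action does not involve the $(+)$-component at all. Conjugating the one-component contact generator by $K$ yields a descent operator $\hat{L}_{\mathtt{CM}}^-$ whose adjoint, when evaluated on the one-point function $k_t^-(y)=k_t(\varnothing,\{y\})$, produces exactly the death term $-k_t^-(y)$ together with the birth term $\la^-\int a^-(y-y')k_t^-(y')\,dy'$. This is the cleanest case and fixes the first equation; I expect it to follow from the known one-component computation verbatim, since the $(-)$-system evolves autonomously.

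Next I would compute the contribution of $\LC^+$ to $\hat{L}^\ast$ evaluated at $(\{x\},\varnothing)$; by the symmetry between the two independent contact pieces this yields $-k_t^+(x)+\la^+\int a^+(x-x')k_t^+(x')\,dx'$, again from the one-component formula. The genuinely new term comes from $\Lint^+$, the interaction birth of a $(+)$-particle driven by the $(-)$-configuration. Here I would carefully conjugate $\Lint^+$ by $\K$: its birth operator creates a new $(+)$-point $x$ with rate $\la\sum_{y\in\ga^-}a(x-y)$, so after taking $\K^{-1}\,\Lint^+\,\K$ and passing to the adjoint, evaluation at $(\{x\},\varnothing)$ should pick out precisely the cross term $\la\int a(x-y)k_t^-(y)\,dy$, coupling the $(+)$ first-order function to the $(-)$ first-order function. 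Summing the three contributions gives the stated system.

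The main obstacle is the bookkeeping in conjugating $\Lint^+$ by $\K$ and verifying that, after adjoining with respect to $d\la_1\,d\la_1$, the only surviving contribution at the one-point argument $(\{x\},\varnothing)$ is the single integral against $k_t^-$ rather than higher-order correlation functions; one must check that the creation of the point $x$ contributes only through the weight $\la\,a(x-y)$ summed over a single $(-)$-point and that no $k^{+-}$ or $k^{++}$ terms leak in at first order. This requires tracking how the $K$-transform redistributes the sum $\sum_{y\in\ga^-}a(x-y)$ over subconfigurations and confirming that the minimal nonzero support in the $(-)$-variable is a single point, which closes the first-order equations without reference to second-order data.
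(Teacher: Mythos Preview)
Your plan is essentially the paper's own argument: compute the descent operator $\hat{L}=\K^{-1}L\K$ piece by piece for $\LC^+$, $\LC^-$, $\Lint^+$, pass to the adjoint $\hat{L}^\ast$ with respect to $d\la_1\,d\la_1$ (the paper does this via the Mecke formula), and then evaluate the resulting explicit expression for $\hat{L}^\ast k$ at the one-point configurations. One small slip: for $k_t^-(y)$ the relevant argument is $\eta=(\varnothing,\{y\})$, not $(\{y\},\varnothing)$; once that is corrected, the substitution into the adjoint formula shows immediately that the first-order system closes, since all the double sums over $\eta^\pm$ in $\hat{L}^\ast$ are empty at one-point configurations and only the integral terms survive.
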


\begin{proposition}\label{propeq2}
For any $x,y,x_1,x_2,y_1,y_2\in\R^d$
\begin{align*}
\frac{\partial k_{t}^{--}(y_{1},y_{2})}{\partial t} &= \la
^{-}\int_{\R^{d}}a^{-}(y_{2}-y' )k_{t}^{--}(y_{1},y' )dy' +\la
^{-}\int_{\R^{d}}a^{-}(y_{1}-y' )k_{t}^{--}(y_{2},y' )dy'  \\
&\quad -2k_{t}^{--}(y_{1},y_{2}) +\la ^{-}a^{-}(y_{1}-y_{2})[k_{t}^{-}(y_{1})+k_{t}^{-}(y_{2})], \\
\frac{\partial k_{t}^{+-}(x,y)}{\partial t} &=\la
^{+}\int_{\R^{d}}a^{+}(x-x' )k_{t}^{+-}(x' ,y)dx'  +\la
^{-}\int_{\R^{d}}a^{-}(y-y' )k_{t}^{+-}(x,y' )dy'  \\
&\quad -2k_{t}^{+-}(x,y)   +\la a(x-y)k_{t}^{-}(y) + \la
\int_{\R^{d}}a(x-y' )k_{t}^{--}(y,y' )dy' , \\
\frac{\partial k_{t}^{++}(x_{1},x_{2})}{\partial t} &=\la
^{+}\int_{\R^{d}}a^{+}(x_{1}-x' )k_{t}^{++}(x_{2},x' )dx' +\la
^{+}\int_{\R^{d}}a^{+}(x_{2}-x' )k_{t}^{++}(x_{1},x' )dx'  \\
&\quad -2k_{t}^{++}(x_{1},x_{2}) +\la ^{+}a^{+}(x_{1}-x_{2})[k_{t}^{+}(x_{1})+k_{t}^{+}(x_{2})] \\
&\quad+\la \int_{\R^{d}}a(x_{1}-y)k_{t}^{+-}(x_{2},y)dy +\la
\int_{\R^{d}}a(x_{2}-y)k_{t}^{+-}(x_{1},y)dy .
\end{align*}%
\end{proposition}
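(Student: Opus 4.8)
The plan is to follow the scheme set up above: first compute the descent operator $\hat L=\K^{-1}L\K$ on $B_{bs}(\Ga_0^2)$, then pass to its adjoint $\hat L^{\ast}$ with respect to $d\la_1 d\la_1$, and finally evaluate $\bigl(\hat L^{\ast}k_t\bigr)(\eta^+,\eta^-)$ at the three configurations $(\varnothing,\{y_1,y_2\})$, $(\{x\},\{y\})$ and $(\{x_1,x_2\},\varnothing)$, which by \eqref{cf12} yield the three displayed equations. Since $\K=K^+K^-$ and $L=\LC^++\LC^-+\Lint^+$ with the three summands acting through the two independent copies, $\hat L$ splits accordingly as $\hat L=\widehat{\LC^+}+\widehat{\LC^-}+\hLint^{+}$, and it suffices to find each piece separately.

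To compute each descent I would use two facts. First, for $F=\K G$ one has $F(\ga^+\cup x,\ga^-)-F(\ga^+,\ga^-)=\bigl(\K[G(\,\cdot\,\cup x,\,\cdot\,)]\bigr)(\ga)$ and likewise in the second coordinate, so the increments appearing in $L$ are again $\K$-images. Second, each birth rate is itself a $\K$-image: e.g. $\sum_{y\in\ga^-}a(x-y)=\bigl(\K\tilde g_x\bigr)(\ga)$ with $\tilde g_x$ supported on $(\varnothing,\{y\})$. Multiplying the two $\K$-images and using the product formula $(\K A)(\K B)=\K(A\star B)$ reduces $\K^{-1}L\K$ to a $\star$-convolution. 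The crucial point is that this is the convolution \emph{with overlaps}, $(A\star B)(\xi)=\sum_{\alpha\cup\beta=\xi}A(\alpha)B(\beta)$: besides the term in which the catalyst point $y$ and the configuration $\beta$ are disjoint, there is the term in which $y$ already belongs to $\beta$. For the death part a direct M\"obius computation gives simply $\widehat{D^{\pm}}G=-|\eta^{\pm}|\,G$.

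Carrying this out, the contact birth descent in, say, the $(-)$-component takes the form of an order-preserving part $\la^-\sum_{y\in\eta^-}\int a^-(x-y)\,G(\eta^+,(\eta^-\setminus y)\cup x)\,dx$ plus an order-lowering part coming from the overlap, $\la^-\int\bigl(\sum_{y\in\eta^-}a^-(x-y)\bigr)G(\eta^+,\eta^-\cup x)\,dx$, and analogously for $\LC^+$ and for the two pieces of $\hLint^{+}$, whose overlap piece moves a point from the $(-)$- to the $(+)$-coordinate. I would then compute $\hat L^{\ast}$ term by term using the Minlos/Mecke identity $\int_{\Ga_0}\sum_{y\in\eta}H(y,\eta\setminus y)\,d\la_1(\eta)=\int_{\R^d}\int_{\Ga_0}H(y,\eta)\,d\la_1(\eta)\,dy$ and its reverse, to reabsorb or release one integration variable. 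Under adjunction the order-lowering pieces become the coincidence terms $\la^{\pm}\sum_{y\in\eta^{\pm}}\sum_{y'\in\eta^{\pm}\setminus y}a^{\pm}(y-y')\,k(\ldots,\eta^{\pm}\setminus y)$, which are exactly what produce the summands $\la^{\pm}a^{\pm}(y_1-y_2)[k^{\pm}(y_1)+k^{\pm}(y_2)]$ after evaluation.

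Finally I would substitute the three two-point configurations into $\hat L^{\ast}k_t=\partial_t k_t$. Evaluation is routine bookkeeping: the death part yields $-2k$ in every case; the order-preserving birth parts give the convolution integrals; the coincidence (overlap) terms give the $a(\cdot)[k+k]$ summands for $k^{--}$ and $k^{++}$; and the two pieces of $\hLint^{+}$ give, for $k^{+-}$, the terms $\la a(x-y)k^-(y)$ and $\la\int a(x-y')k^{--}(y,y')\,dy'$, and for $k^{++}$ the two integrals against $k^{+-}$. The main obstacle, and the only place where genuine care is needed, is the overlap term in the $\star$-product: using the naive disjoint convolution would silently drop precisely the order-lowering contributions, hence the coincidence summands $\la^{\pm}a^{\pm}(y_1-y_2)[k^{\pm}(y_1)+k^{\pm}(y_2)]$ and $\la a(x-y)k^-(y)$; keeping track of it, together with consistent use of the Minlos identity in both directions, is what makes the computation close.
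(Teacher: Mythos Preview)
Your proposal is correct and follows essentially the same route as the paper: compute the descent operator $\hat L$, pass to the adjoint $\hat L^{\ast}$ via the Mecke identity, then evaluate at the two-point configurations. The only minor methodological difference is that the paper computes $\hat L$ by expanding $\K^{-1}$ directly as the alternating M\"obius sum, whereas you obtain it from the product formula $(\K A)(\K B)=\K(A\star B)$ with the overlapping $\star$-convolution; both give the same seven-term expression for $\hat L$ and hence the same $\hat L^{\ast}$.
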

Obviously, equations for ($-$)-system are independent. Recall that
such equations were studied in \cite{KoKuPi07}.

Let us formulate the main problem
for the first order correlation functions.
\begin{problem}
We should to study the asymptotic properties of the solutions
of equations from Proposition~\ref{propeq1} under following
initial conditions:
\begin{equation}\label{initval1}
k_{0}^{+}(x) =c^{+}+\psi ^{+}(x)\geq 0, \qquad
k_{0}^{-}(y) =c^{-}+\psi ^{-}(y)\geq \alpha^->0,
\end{equation}
where constants $c^{+}, c^{-}$ are positive, functions $\psi ^{+}, \psi ^{-}$ and their Fourier
transforms $\hat{\psi}^{+}, \hat{\psi}^{-}$ are
integrable on $\R^{d}$.
\end{problem}

Explicit expressions for solutions are in the next section.
The answer of the Problem~1  may be found in the next theorem.
\begin{theorem}\label{thm_as_1}
Let $d\geq  3$ and \eqref{decayofa}, \eqref{intoffour} hold. The
first correlation functions have the following asymptotic at $t\to
\infty$: \step{1} for any $y\in\R^d$
\begin{equation*}
k_{t}^{-}(y)\rightarrow \left\{
\begin{array}{@{\,}r@{\quad}l@{}}
0, & \text{if } \ \la ^{-}<1 \\
\infty , & \text{if } \ \la ^{-}>1
\end{array}
\right. ,
\end{equation*}
and in the case $\la ^{-}=1$
\begin{equation*}
k_{t}^{-}(y)\rightarrow c^{-};
\end{equation*}
\step{2} for any $x\in\R^d$
\begin{equation*}
k_{t}^{+}(x)\rightarrow \left\{
\begin{array}{@{\,}r@{\quad}l@{}}
0, & \text{if } \ \max \{\la ^{+},\la ^{-}\}<1 \\
\infty , & \text{if } \ \min \{\la ^{+},\la ^{-}\}\geq 1%
\end{array}%
\right. ,
\end{equation*}
next, in the case $1=\la ^{+}>\la ^{-}$
\begin{equation*}
k_{t}^{+}(x)\rightarrow c^{+}+\frac{\la c^{-}}{1-\la ^{-}},
\end{equation*}
and in the case $\la ^{+}<\la ^{-}=1$
\begin{equation*}
k_{t}^{+}(x)\rightarrow \frac{\la c^{-}}{1-\la ^{+}}.
\end{equation*}
\end{theorem}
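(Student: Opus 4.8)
The plan is to pass to the Fourier transform in the spatial variable, which turns both convolution equations of Proposition~\ref{propeq1} into a family (indexed by $p\in\X$) of elementary linear ODEs in $t$. Writing $\hat{k}_t^{\pm}(p)$ for the transforms, the $(-)$-equation becomes $\partial_t\hat{k}_t^-(p)=(\la^-\hat{a}^-(p)-1)\hat{k}_t^-(p)$ and the $(+)$-equation becomes $\partial_t\hat{k}_t^+(p)=(\la^+\hat{a}^+(p)-1)\hat{k}_t^+(p)+\la\hat{a}(p)\hat{k}_t^-(p)$. Each is solved explicitly (the second by Duhamel), giving $\hat{k}_t^-(p)=e^{t(\la^-\hat{a}^-(p)-1)}\hat{k}_0^-(p)$ and a corresponding integrated formula for $\hat{k}_t^+(p)$; inverting the transform produces the closed-form solutions announced in the next section. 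The standing hypothesis \eqref{intoffour} guarantees these inverse integrals are legitimate.

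Next I would split the initial data according to \eqref{initval1} as $k_0^{\pm}=c^{\pm}+\psi^{\pm}$ and treat the two pieces separately. The constant $c^{\pm}$ lives on the single mode $p=0$, where $\hat{a}^+(0)=\hat{a}^-(0)=\hat{a}(0)=\langle a\rangle=1$; hence the constants evolve by the finite system $\dot c_t^-=(\la^--1)c_t^-$ and $\dot c_t^+=(\la^+-1)c_t^++\la c_t^-$ with $c_0^{\pm}=c^{\pm}$. Solving this (distinguishing $\la^+\neq\la^-$ from the resonant $\la^+=\la^-$) yields $c_t^-=c^-e^{(\la^--1)t}$ and, for $\la^+\neq\la^-$,
\begin{equation*}
c_t^+=\Bigl(c^+-\tfrac{\la c^-}{\la^--\la^+}\Bigr)e^{(\la^+-1)t}+\tfrac{\la c^-}{\la^--\la^+}e^{(\la^--1)t}.
\end{equation*}
Reading off the limits of these explicit expressions already reproduces every value in the theorem: decay to $0$ when the relevant exponents are negative, blow-up when some exponent is nonnegative, the finite limit $c^-$ for $k^-$ when $\la^-=1$, and for $k^+$ the values $c^++\frac{\la c^-}{1-\la^-}$ when $1=\la^+>\la^-$ and $\frac{\la c^-}{1-\la^+}$ when $\la^+<\la^-=1$ (the latter two obtained by letting the vanishing $e^{(\la^\mp-1)t}$ factor drop out).

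It then remains to show that the fluctuation part coming from $\psi^{\pm}$ does not affect these limits: the inverse-Fourier integrals with integrand $e^{t(\la^{\pm}\hat{a}^{\pm}(p)-1)}\hat{\psi}^{\pm}(p)$, together with the Duhamel cross-term for $k^+$, must tend to $0$ (in the growth regimes they need only be of lower order than $c_t^+$). The engine is dominated convergence, and the first ingredient is that $\hat{a}^{\pm}(p)<1$ and $\hat{a}(p)<1$ for every $p\neq0$: since $a\ge0$, $\langle a\rangle=1$ and $a$ is even, $1-\hat{a}(p)=\int_{\X}a(x)(1-\cos(p,x))\,dx$ vanishes only if $\cos(p,x)=1$ for a.e.\ $x$ in the positive-measure set $\{a>0\}$, which for $p\neq0$ would force that set into a countable union of hyperplanes, a null set — impossible. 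Hence each integrand tends to $0$ for a.e.\ $p$, and after dividing by the dominant exponential the surviving constant mode is supported on the null set $\{0\}$.

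The crux is the uniform-in-$t$ dominating function near the origin, and this is where $d\ge3$ enters. Away from $p=0$ the exponents are bounded strictly below $0$ and everything is controlled by the integrable envelopes $|\hat{\psi}^{\pm}|$ and $|\hat{a}\,\hat{\psi}^-|$. Near $p=0$ one uses that $1-\hat{a}^{\pm}(p)$ and $1-\hat{a}(p)$ are bounded below by a constant times $|p|^2$ (the second moments being finite, as $\delta>2d\ge d+2$ for $d\ge2$). In the non-resonant critical cases the Duhamel denominator $\la^-\hat{a}^-(p)-\la^+\hat{a}^+(p)$ stays bounded away from $0$, so that term is dominated by a constant multiple of $|\hat{a}\,\hat{\psi}^-|$ and dominated convergence applies in every dimension. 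The genuinely delicate regime is the resonant one $\la^+=\la^-$, where the Duhamel factor behaves like $t\,e^{-\kappa|p|^2 t}$ with $\kappa>0$; its supremum over $t$ is comparable to $|p|^{-2}$, and $\int_{|p|<1}|p|^{-2}\,dp<\infty$ precisely when $d\ge3$. I expect securing this integrable dominating envelope on a neighbourhood of the origin to be the main obstacle; once it is in place, assembling the finitely many parameter cases (and checking non-negativity of $k_t^{\pm}$ to rule out spurious cancellation in the blow-up regimes) is routine.
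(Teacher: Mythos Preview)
Your approach is essentially the same as the paper's. Both solve the equations via Fourier transform, split the initial data as constant plus integrable fluctuation, read off the constant-mode asymptotics from the explicit two-dimensional ODE at $p=0$, and kill the fluctuation contributions by dominated convergence using that $1-\hat a^{\pm}(p)\sim c|p|^2$ near the origin so that $|p|^{-2}$ provides the dominating envelope when $d\ge3$; the paper packages the last step into two short lemmas (its Lemmas~\ref{int_a} and~\ref{boundint}) and phrases the solution formulas through the Markov semigroups $e^{t\la^{\pm}L^{\pm}}$, but the content is identical.

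One small point to tighten: your sentence ``the Duhamel denominator $\la^{-}\hat a^{-}(p)-\la^{+}\hat a^{+}(p)$ stays bounded away from $0$'' is only true on a neighbourhood of the origin; globally it tends to $0$ as $|p|\to\infty$ and may vanish on a set of positive measure even when $\la^{+}\neq\la^{-}$. What actually controls the ratio everywhere is the elementary bound $\sup_{t\ge0}\bigl|(e^{ta}-e^{tb})/(a-b)\bigr|\le 1/\max(-a,-b)$ for $a,b<0$, which is precisely the content of the paper's Lemma~\ref{boundint} and which you already invoke in the resonant case. Once you use that bound uniformly, your sketch goes through without further change.
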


Let us discuss this result. Of course, first part about the
independent $(-)$-system is the same as in \cite{KoKuPi07, KoSk06}.
It state that $\la^-=1$ is critical value; below of this value
$(-)$-system will degenerate at infinity, above of this value
$(-)$-system will grow (exponentially, see next section for
details). At this critical value $(-)$-system continues to be
stable.

$(+)$-system consists of two parts: independent
contact and influence from the side of~
$(-)$-system. If~$\max \{\la ^{+},\la ^{-}\}<1$
it means that independent part of $(+)$-system is sub-critical
(and~should disappear at infinity)
and additionally it has influence of disappearing
 $(-)$-system; naturally, such $(+)$-system  will
disappear. If $\min \{\la ^{+},\la ^{-}\}\geq
1$ it means that growing or stable independent part of $(+)$-system has
influence by stable or growing \mbox{$(-)$-system}, hence,
$(+)$-system will grow.

Let us concentrate our attention on two other
cases. If $\la ^{+}=1, \la ^{-}<1$ it means
that independent part of $(+)$-system is
stable and has influence by degenerating
$(-)$-system. As a result, $(+)$-system will
keep stability property but the limiting
value will have the initial value
of $(-)$-system which will disappearing at
infinity. Hence, $(+)$-system will have memory
about vanished $(-)$-system.

If $\la ^{+}<1, \la ^{-}=1$ it means that
degenerating independent part of $(+)$-system
has influence by stable $(-)$-system. In
result, $(+)$-system will stop disappearing
and become stable. But ``fare'' for this will
be absence of the initial value of
$(+)$-system in limit. Therefore, $(+)$-system
``will lost memory'' about its origin and ``remember''
only about origin of ``donor''.

In studying asymptotic of the second correlation
functions we concentrate our attention only
on this two cases when $(+)$-system will
be stable. For simplicity of computations
we consider translation invariant case only:
\begin{equation}\label{tr_inv}
\psi ^{+}=\psi ^{-}\equiv 0.
\end{equation}

\begin{problem}
We should to study the asymptotic properties of the solutions of equations from Proposition~\ref{propeq2}
under following initial conditions:
\begin{equation}\label{initval2}
\begin{aligned}
k_{0}^{++}(x_{1},x_{2}) &=c^{++}+\varphi ^{++}(x_{1}-x_{2})\geq
0, \\
k_{0}^{+-}(x,y) &=c^{+-}+\varphi ^{+-}(x-y)\geq  0, \\
k_{0}^{--}(y_{1},y_{2}) &=c^{--}+\varphi ^{--}(y_{1}-y_{2})\geq  0,
\end{aligned}
\end{equation}
where $c^{--},~c^{+-},~c^{++}$ are positive
constants and
 and functions $\varphi ^{--},~\varphi ^{+-},~\varphi ^{++}$ are
even functions which are integrable on $\R^{d}$ together with
their Fourier transforms
$\hat{\varphi}^{--},~\hat{\varphi}^{+-},~\hat{\varphi}^{++}$.
\end{problem}

Explicit expressions for solutions are also in the next section.
The answer of the Problem~2  may be found in the next theorem.

\begin{theorem} \label{thm_as_2} Let $d\geq  3$  and \eqref{decayofa}, \eqref{intoffour},
\eqref{tr_inv} hold. The second correlation functions have the
following asymptotic at $t\to \infty$: \step{1} let $\la
^{+}=1,~0<\la ^{-}<1$, then for any $x,y,x_1,x_2,y_1,y_2\in\R^d$
\begin{equation*}
\left\{
\begin{aligned}
k_{t}^{--}(y_{1},y_{2})&\rightarrow0, \\
k_{t}^{+-}(x,y)&\rightarrow0, \\
k_{t}^{++}(x_{1},x_{2})&\rightarrow \left( c^{++}-\frac{%
2\la c^{+-}}{\la ^{-}-1}+\frac{\la ^{2}c^{--}}{(\la
^{-}-1)^{2}}\right) +\Omega ^{++}(x_{1}-x_{2})<\infty;
\end{aligned}%
\right.
\end{equation*}

\step{2} let $\la ^{-}=1,~0<\la ^{+}<1$,
then for any $x,y,x_1,x_2,y_1,y_2\in\R^d$
\begin{equation*}
\left\{
\begin{aligned}
k_{t}^{--}(y_{1},y_{2})&\rightarrow c^{--}+\Xi
^{--}(y_{1}-y_{2})<\infty,  \\
k_{t}^{+-}(x,y)&\rightarrow\frac{\la c^{--}}{1-\la
^{+}}+\Xi ^{+-}(x-y)<\infty,  \\
k_{t}^{++}(x_{1},x_{2})&\rightarrow\frac{\la ^{2}c^{--}}{%
(1-\la ^{+})^{2}}+\Xi ^{++}(x_{1}-x_{2})<\infty;
\end{aligned}
\right.
\end{equation*}%
here functions $\Xi ^{--},\Xi ^{+-},\Xi ^{++}$ depend on initial
value $c^{-}$ only and function $\Omega ^{++}$ depends on initial
value $c^{+}$ only
 (of course,
they also depend on $\la,\la^\pm, a, a^\pm$).
\end{theorem}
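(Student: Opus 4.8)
The plan is to exploit the lower-triangular (hierarchical) structure of the system in Proposition~\ref{propeq2} together with the fact that, under \eqref{tr_inv}, every correlation function stays translation invariant, i.e.\ depends only on the difference of its arguments. First I would record the first-order input: with $\psi^\pm\equiv 0$ the functions $k_t^\pm$ are constant in space and solve the scalar ODEs $\frac{d}{dt}k_t^-=(\la^--1)k_t^-$ and $\frac{d}{dt}k_t^+=(\la^+-1)k_t^++\la k_t^-$, so $k_t^-=c^-e^{(\la^--1)t}$ and $k_t^+$ is the explicit integral of this; their limits are exactly the constants appearing in Theorem~\ref{thm_as_1}. These enter the second-order equations only through the inhomogeneous (source) terms.

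Next I would pass to the difference variables $y_1-y_2$, $x-y$, $x_1-x_2$ and take the Fourier transform. Every spatial operator in Proposition~\ref{propeq2} is either multiplication or convolution with $a,a^\pm$, so (using that these kernels are even, which lets both transport terms in each equation combine into a single convolution) the transform diagonalises the spatial action: for each fixed $p$ one obtains a closed, lower-triangular system of three linear ODEs in $t$ for $\hat k_t^{--}(p),\hat k_t^{+-}(p),\hat k_t^{++}(p)$, with relaxation coefficients $2(\la^-\hat a^-(p)-1)$, $\la^+\hat a^+(p)+\la^-\hat a^-(p)-2$ and $2(\la^+\hat a^+(p)-1)$ respectively, and sources built from $\hat a,\hat a^\pm$ and the first-order quantities above. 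I would solve this from the bottom up — first $\hat k^{--}$, then $\hat k^{+-}$, then $\hat k^{++}$ — by variation of constants, obtaining closed expressions for every $p$ and $t$.

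The asymptotics then follow by inspecting the signs of the relaxation coefficients. Since $\hat a,\hat a^\pm$ are real and even with $\hat a(0)=1$ and $\hat a(p)\le1$, and since \eqref{decayofa} forces finite second moments for $d\ge2$ so that $1-\hat a(p)\asymp|p|^2$ near the origin, each coefficient is $\le 0$ and vanishes at $p=0$ only for the critical component. In case~1 ($\la^+=1$) the $(--)$- and $(+-)$-coefficients are strictly negative, whence $\hat k^{--},\hat k^{+-}\to 0$; for $(++)$ the coefficient vanishes only at $p=0$, the source tends to $2\hat a^+(p)\,k_\infty^+$ with $k_\infty^+$ the first-order limit, and the limiting profile is $\hat\Omega^{++}(p)=\hat a^+(p)k_\infty^+/(1-\hat a^+(p))$; the surviving $p=0$ (constant) mode is separated and, after integrating the triangular constant-mode ODEs, yields exactly $c^{++}-2\la c^{+-}/(\la^--1)+\la^2c^{--}/(\la^--1)^2$. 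Case~2 ($\la^-=1$) is analogous, the critical component now being $(--)$; the constant parts integrate to $c^{--}$, $\la c^{--}/(1-\la^+)$ and $\la^2c^{--}/(1-\la^+)^2$, and $\Xi^{--},\Xi^{+-},\Xi^{++}$ are the inverse Fourier transforms of the corresponding limiting expressions, all driven through the first-order limits.

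I expect the only genuine obstacle to be the interchange of the $t\to\infty$ limit with the inverse Fourier transform near $p=0$. The limiting profiles carry a $1/(1-\hat a(p))\asymp|p|^{-2}$ singularity at the origin, so their inverse transforms are Riesz-type integrals $\int e^{i(p,r)}|p|^{-2}\,dp$; in polar coordinates the integrand is $\sim|p|^{d-3}$ near $0$, integrable precisely when $d\ge3$. This is exactly where the hypothesis $d\ge3$ enters, and it is also what makes $\Omega^{++},\Xi^{--},\Xi^{+-},\Xi^{++}$ pointwise finite (decaying like $|r|^{2-d}$) though not integrable. The condition \eqref{intoffour} guarantees convergence of the transforms at infinity and realises every profile as an absolutely convergent integral; a uniform-in-$p$ dominated-convergence bound on the variation-of-constants solutions, combined with this $|p|^{d-3}$ control, is what rigorously justifies the stated pointwise limits.
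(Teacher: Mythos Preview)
Your approach is essentially the paper's: both separate the constant part from the translation-invariant integrable part, pass to Fourier in the difference variable so that the hierarchy becomes a lower-triangular family of linear ODEs in $t$ indexed by $p$, solve by variation of constants, and then justify the pointwise $t\to\infty$ limit by dominated convergence, with the $1/(1-\hat a^\pm(p))\asymp|p|^{-2}$ singularity at $p=0$ (integrable exactly when $d\ge3$) as the decisive technical point. The paper carries out the majorant step you anticipate via two lemmas---one showing $b(p)/(\hat a^\pm(p)-1)\in L^1$ for $d\ge3$, the other bounding $\sup_{t\ge0}\bigl(e^{tf(p)}-e^{tf^\pm(p)}\bigr)/\bigl(f(p)-f^\pm(p)\bigr)$ by $-1/f^\pm(p)$---which together give the uniform-in-$p$ dominating function; this is the concrete form of the ``uniform dominated-convergence bound'' you flag as the only genuine obstacle.
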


The explicit expressions for limits will be presented in the next
section.

As we see, the situation with ``memory'' which we had for the first
correlation functions is the same for the second one: in the first
case $(+)$-system will obtain additional memory about vanished
$(-)$-system; in the second case $(+)$-system will have memory about
$(-)$-system only.

\begin{remark}
Note that if $c^{++}=(c^+)^2$, $c^{+-}=c^+c^-$,
$c^{--}=(c^-)^2$ then the previous theorems
show, in fact, that there exist
finite limits of so-called second order Ursell
functions $k_t^{++}-(k_t^{+})^2$, $k_t^{+-}-k_t^{+}k_t^{-}$,
$k_t^{--}-(k_t^{-})^2$.
\end{remark}

\section{Proofs}
In this section we present proofs of all
our results.

\subsection{Equations for time evolution of the correlation functions}
First of all we show how to obtain the equations from the Propositions~\ref{propeq1} and \ref{propeq2}.
We start from the explicit form of the descent
operator $\hat{L}$.

\begin{proposition}
Let $G\in B_{bs}(\Ga^2_0)$. Then for any
$\eta=(\eta^+,\eta^-)\in\Ga^2_0$
\begin{align*}
\left( \hat{L}G\right) ( \eta ^{+},\eta ^{-})
&=-\left( \left\vert \eta ^{+}\right\vert +\left\vert \eta
^{-}\right\vert
\right) G( \eta ^{+},\eta ^{-}) \\
&\quad+\la ^{+}\int_{\X}G\left( \eta ^{+}\cup x,\eta ^{-}\right)
\left(
\sum_{x' \in \eta ^{+}}a^{+}\left( x-x' \right) \right) dx \\
&\quad+\la ^{+}\int_{\X}\sum_{x' \in \eta ^{+}}G\left( \eta
^{+}\setminus x' \cup x,\eta ^{-}\right) a^{+}\left( x-x' \right) dx
\\
&\quad+\la ^{-}\int_{\X}G\left( \eta ^{+},\eta ^{-}\cup y\right)
\left(
\sum_{y' \in \eta ^{-}}a^{-}\left( y-y' \right) \right) dy \\
&\quad+\la ^{-}\int_{\X}\sum_{y' \in \eta ^{-}}G( \eta ^{+},\eta
^{-}\setminus y' \cup y) a^{-}\left( y-y' \right) dy \\
&\quad+\la \int_{\X}G\left( \eta ^{+}\cup x,\eta ^{-}\right) \left(
\sum_{y' \in \eta ^{-}}a\left( x-y' \right) \right) dx \\
&\quad+\la \int_{\X}\sum_{y' \in \eta ^{-}}G\left( \eta ^{+}\cup
x,\eta ^{-}\setminus y' \right) a\left( x-y' \right) dx
\end{align*}
\end{proposition}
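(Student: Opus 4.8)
The plan is to exploit the linearity of $\K^{-1}L\K$ together with the tensor factorization $\K=K^+K^-=K^-K^+$. Since $L=\LC^+ +\LC^- +\Lint^+$, we have $\hat L=\K^{-1}\LC^+\K+\K^{-1}\LC^-\K+\K^{-1}\Lint^+\K$, and the $(-)$-contact term is the image of the $(+)$-contact term under interchanging the two components ($+\leftrightarrow-$, $a^+\leftrightarrow a^-$, $\la^+\leftrightarrow\la^-$). Thus it suffices to treat three building blocks: the constant-rate death part $\sum_{x\in\ga^+}[F(\ga^+\setminus x,\ga^-)-F]$, the contact birth part $\la^+\int_\X(\sum_{x'\in\ga^+}a^+(x-x'))[F(\ga^+\cup x,\ga^-)-F]\,dx$, and the interaction birth part $\Lint^+$.

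Before assembling, I would record two elementary identities for $\K$. First, a shift identity, obtained by splitting the sum defining $\K G$ according to whether the new point lies in the finite subconfiguration: for fixed $x$,
\begin{equation*}
(\K G)(\ga^+\cup x,\ga^-)-(\K G)(\ga^+,\ga^-)=\bigl(\K[G(\cdot\cup x,\cdot)]\bigr)(\ga^+,\ga^-),
\end{equation*}
and symmetrically in the $(-)$-slot. Second, a product identity: for $f\in L^1(\X)$ and $H\in B_{bs}(\Ga_0^2)$,
\begin{equation*}
\Bigl(\sum_{x'\in\ga^+}f(x')\Bigr)(\K H)(\ga)=\bigl(\K\widetilde H\bigr)(\ga),
\end{equation*}
with
\begin{equation*}
\widetilde H(\eta^+,\eta^-)=\Bigl(\sum_{x'\in\eta^+}f(x')\Bigr)H(\eta^+,\eta^-)+\sum_{x'\in\eta^+}f(x')\,H(\eta^+\setminus x',\eta^-).
\end{equation*}
This follows by writing $\sum_{x'\in\ga^+}f(x')=\sum_{x'\in\eta^+}f(x')+\sum_{x'\in\ga^+\setminus\eta^+}f(x')$ inside $\K H=\sum_{\eta^+\Subset\ga^+,\,\eta^-\Subset\ga^-}H(\eta^+,\eta^-)$ and reindexing the off-diagonal part through $\zeta^+=\eta^+\cup x'$; there is an analogous statement with the sum running over $\ga^-$.

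With these in hand the assembly is mechanical. For the death part, splitting $\K G$ by whether $x\in\eta^+$ gives $(\K G)(\ga^+\setminus x,\ga^-)-(\K G)(\ga^+,\ga^-)=-\sum_{\eta^+\Subset\ga^+,\,x\in\eta^+,\,\eta^-\Subset\ga^-}G$; summing over $x\in\ga^+$ yields $-\K(|\eta^+|\,G)$, so its $\K^{-1}$-image is $-|\eta^+|\,G$, and together with the $(-)$-death this produces the term $-(|\eta^+|+|\eta^-|)G$. For the $(+)$-contact birth I apply the shift identity in the $(+)$-slot, then the product identity with $f=a^+(x-\cdot)$, and finally $\K^{-1}$ (linear, commuting with $\int_\X\cdot\,dx$); this gives exactly the two $\la^+$-terms. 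For $\Lint^+$ the only change is that the shift acts in the $(+)$-slot while the coefficient $\sum_{y\in\ga^-}a(x-y)$ lives in the $(-)$-slot, so I use the $(-)$-version of the product identity with $f=a(x-\cdot)$, yielding the two $\la$-terms. Collecting all contributions gives the claimed formula.

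The only genuinely delicate point, and the step I expect to be the main obstacle, is the reindexing $\zeta^+=\eta^+\cup x'$ in the product identity: it is precisely the source of the second, ``replacement'' term $G(\eta^+\setminus x'\cup x,\eta^-)$ in each birth contribution, and keeping straight which summand is created versus retained is where an error would most likely slip in. A secondary technical point is justifying the interchange of $\K^{-1}$, the $x$-integration, and the sums; this follows from $G\in B_{bs}(\Ga_0^2)$ together with the integrability and normalization of $a,a^\pm$, which render every sum finite (by the local support of $G$) and every integral absolutely convergent, so that Fubini's theorem and the linearity of $\K^{-1}$ apply.
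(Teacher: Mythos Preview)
Your proof is correct, and it follows the same overall decomposition as the paper into death, contact-birth, and interaction pieces. The computational technique, however, is genuinely different. The paper computes each $\hat L_i G=\K^{-1}L_i\K G$ by writing out $\K^{-1}$ explicitly as the M\"obius-type double sum $\sum_{\xi^+\subset\eta^+}\sum_{\xi^-\subset\eta^-}(-1)^{|\eta^+\setminus\xi^+|+|\eta^-\setminus\xi^-|}(\cdots)$ and then simplifying the resulting nested subset sums directly. You instead isolate two reusable combinatorial identities for $\K$ (the shift identity and the product identity) which let you write each $L_i\K G$ in the form $\K(\hat L_i G)$, so that applying $\K^{-1}$ is purely formal. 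Your approach is more structural and avoids repeating the inclusion--exclusion simplification for each of the five operators; the paper's approach is more direct but entails the same underlying reindexing, hidden inside the collapse of the alternating sums. Both arrive at the same result, and your remarks on the reindexing $\zeta^+=\eta^+\cup x'$ and on the Fubini justification via $G\in B_{bs}(\Ga_0^2)$ and $a,a^\pm\in L^1$ are to the point.
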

\begin{proof}
Let us denote death and birth parts of
the operator $\LC^+$ by
\begin{align*}
(L_{d}^{+}F)(\ga^+,\ga^-) &:=\sum_{x\in \ga ^{+}}\left[ F\left( \ga
^{+}\setminus
x,\ga ^{-}\right) -F( \ga ^{+},\ga ^{-}) \right], \\
(L_{b}^{+}F)(\ga^+,\ga^-) &:=\la ^{+}\int_{\R^{d}}\left( \sum_{x'
\in \ga ^{+}}a^{+}\left( x-x' \right) \right) \left[ F\left( \ga
^{+}\cup x,\ga ^{-}\right) -F( \ga ^{+},\ga ^{-}) \right] dx.
\end{align*}
In the same way we denote death and birth parts of the operator $\LC^-$: $\LC^-=L_{d}^{-}+L_{b}^{-}$.
As a result,
\begin{equation*}
L =L_{d}^{+}+L_{b}^{+}+L_{d}^{-}+L_{b}^{-}+\Lint ^{+}.
\end{equation*}

Now we calculate image under $\K$-transform
of all this operators.
One has for any $\eta=(\eta^+,\eta^-)\in\Ga^2_0$
\begin{align*}
\left( \hat{L}_{b}^{+}G\right) ( \eta) &=\left( \K ^{-1} L_{b}^{+}\K^{+} G\right)
( \eta ) \\
&=\sum_{\xi ^{+}\subset \eta ^{+}}(-1)^{|\eta ^{+}\setminus \xi
^{+}|}\sum_{\xi ^{-}\subset \eta ^{-}}(-1)^{|\eta ^{-}\setminus \xi
^{-}|}\la ^{+}\int_{\R^{d}}\sum_{x' \in \xi
^{+}}a^{+}(x-x' ) \\
&\qquad \times \left( \sum_{\zeta ^{+}\subset \xi ^{+}\cup x}\sum_{\zeta
^{-}\subset \xi ^{-}}G(\zeta ^{+},\zeta ^{-})-\sum_{\zeta
^{+}\subset \xi
^{+}}\sum_{\zeta ^{-}\subset \xi ^{-}}G(\zeta ^{+},\zeta ^{-})\right) dx \\
&=\la ^{+}\int_{\X}\sum_{x' \in \eta ^{+}}G\left( \eta ^{+}\cup
x,\eta ^{-}\right) a^{+}\left( x-x' \right) dx \\
&\quad+\la ^{+}\int_{\X}\sum_{x' \in \eta ^{+}}G\left( \eta
^{+}\setminus x' \cup x,\eta ^{-}\right) a^{+}\left( x-x' \right)
dx,
\end{align*}%
analogously, we have that%
\begin{align*}
\left( \hat{L}_{b}^{-}G\right) ( \eta ^{+},\eta ^{-}) &=\la
^{-}\int_{\X}\sum_{y' \in \eta ^{-}}G\left(
\eta ^{+},\eta ^{-}\cup y\right) a^{-}\left( y-y' \right) dy \\
&\quad\quad+\la ^{-}\int_{\X}\sum_{y' \in \eta ^{-}}G\left( \eta
^{+},\eta ^{-}\setminus y' \cup y\right) a^{-}\left( y-y' \right)
dy.
\end{align*}

Next,
\begin{align*}
\left( \hLint^{+}G\right) ( \eta )
&=\left( \K ^{-1} \Lint^{+}\K^{+} G\right) ( \eta ) \\
&=\sum_{\xi ^{+}\subset \eta ^{+}}(-1)^{|\eta ^{+}\setminus \xi
^{+}|}\sum_{\xi ^{-}\subset \eta ^{-}}(-1)^{|\eta ^{-}\setminus \xi
^{-}|}\la \int_{\R^{d}}\sum_{y\in \xi ^{-}}a(x-y) \\
&\qquad\times \left( \sum_{\zeta ^{+}\subset \xi ^{+}\cup x}\sum_{\zeta
^{-}\subset \xi ^{-}}G(\zeta ^{+},\zeta ^{-})-\sum_{\zeta
^{+}\subset \xi
^{+}}\sum_{\zeta ^{-}\subset \xi ^{-}}G(\zeta ^{+},\zeta ^{-})\right) dx \\
&=\la \int_{\X}\sum_{y' \in \eta ^{-}}G\left( \eta ^{+}\cup
x,\eta ^{-}\right) a\left( x-y' \right) dx \\
&\quad+\la \int_{\X}\sum_{y' \in \eta ^{-}}G\left( \eta ^{+}\cup
x,\eta ^{-}\setminus y' \right) a\left( x-y' \right) dx.
\end{align*}%
Finally,
\begin{align*}
\left( \hat{L}_{d}^{-}G\right) ( \eta ) &=\left( \K ^{-1} L_{d}^{-}\K^{+} G\right)
( \eta ) \\
&=\sum_{\xi ^{+}\subset \eta ^{+}}(-1)^{|\eta ^{+}\setminus \xi
^{+}|}\sum_{\xi ^{-}\subset \eta ^{-}}(-1)^{|\eta ^{-}\setminus \xi ^{-}|} \\
&\qquad\times\sum_{y\in \xi ^{-}}\left( \sum_{\zeta ^{+}\subset \xi
^{+}}\sum_{\zeta ^{-}\subset \xi ^{-}\setminus y}G(\zeta ^{+},\zeta
^{-})-\sum_{\zeta ^{+}\subset \xi ^{+}}\sum_{\zeta ^{-}\subset \xi
^{-}}G(\zeta ^{+},\zeta
^{-})\right) \\
&=-\left\vert \eta ^{-}\right\vert G( \eta ^{+},\eta ^{-}),
\end{align*}%
and, analogously,
\begin{equation*}
\left( \hat{L}_{d}^{+}G\right) ( \eta ^{+},\eta ^{-}) =-\left\vert
\eta ^{+}\right\vert G( \eta ^{+},\eta ^{-}).
\end{equation*}%
The statement is proved.
\end{proof}

Now we should calculate the adjoint operator
$\hat{L}^{\ast }$.
\begin{proposition}\label{prop_adj_oper}
The adjoint operator $\hat{L}^{\ast }$ has the following form:
\begin{align*}
\left( \hat{L}^{\ast }k\right) ( \eta ^{+},\eta ^{-}) &=-\left(\left\vert \eta ^{+}\right\vert +\left\vert \eta ^{-}\right\vert
\right)k( \eta ^{+},\eta ^{-}) \\
&\quad+\la ^{+}\sum_{x\in \eta ^{+}}\sum_{x' \in \eta ^{+}\setminus
x}a^{+}(x-x' )k\left( \eta ^{+}\setminus x,\eta ^{-}\right) \\
&\quad+\la ^{+}\sum_{x\in \eta ^{+}}\int_{\R^{d}}a^{+}(x-x' )k\left(
\eta ^{+}\setminus x\cup x' ,\eta ^{-}\right) dx'
\\
&\quad+\la ^{-}\sum_{y\in \eta ^{-}}\sum_{y' \in \eta ^{-}\setminus
y}a^{-}(y-y' )k\left( \eta ^{+},\eta ^{-}\setminus y\right) \\
&\quad+\la ^{-}\sum_{y\in \eta ^{-}}\int_{\R^{d}}a^{-}(y-y' )k\left(
\eta ^{+},\eta ^{-}\setminus y\cup y' \right) dy'
\\
&\quad+\la \sum_{x\in \eta ^{+}}\sum_{y\in \eta ^{-}}a(x-y)k (\eta
^{+}\setminus x,\eta ^{-}) \\
&\quad+\la \sum_{x\in \eta
^{+}}\int_{\R^{d}}a(x-y)k\left( \eta ^{+}\setminus
x,\eta ^{-}\cup y\right) dy
\end{align*}
\end{proposition}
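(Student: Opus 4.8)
The plan is to substitute the explicit form of $\hat{L}G$ from the previous proposition into the left-hand side of the duality relation \eqref{adjoint} and to transform each of its seven summands into an expression in which $G(\eta^+,\eta^-)$ appears as an isolated factor multiplied against $d\la_1 d\la_1$; reading off the coefficient of $G$ then produces $\hat{L}^{\ast}k$. Since this coefficient identification must hold simultaneously for every $G\in B_{bs}(\Ga_0^2)$, a class that is rich enough to determine a function $d\la_1 d\la_1$-a.e., it pins down $\hat{L}^{\ast}k$ uniquely. The single analytic tool for the transformation is the Mecke-type identity for the Lebesgue--Poisson measure (cf.\ \cite{KoKu99}), which in the normalisation $z=1$ reads
\begin{equation*}
\int_{\Ga_0}\sum_{x\in\eta} h(x,\eta\setminus x)\, d\la_1(\eta)
= \int_{\X}\int_{\Ga_0} h(x,\eta)\, d\la_1(\eta)\, dx ,
\end{equation*}
together with its two-component versions acting on the $(+)$- or $(-)$-variable separately. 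All the interchanges of sums and integrals invoked below are legitimate because $G\in B_{bs}(\Ga_0^2)$, the kernels $a,a^+,a^-$ are integrable, and $k$ may be taken locally integrable, so Fubini's theorem applies throughout.

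The constant term is immediate: $-(|\eta^+|+|\eta^-|)$ is a multiplication operator and hence symmetric with respect to $d\la_1 d\la_1$, contributing $-(|\eta^+|+|\eta^-|)k(\eta^+,\eta^-)$. For the two $(+)$-birth terms I would apply the identity in the $(+)$-variable only. For $\la^+\int_\X G(\eta^+\cup x,\eta^-)\sum_{x'\in\eta^+}a^+(x-x')\,dx$, rewriting the $x$-integral as the right-hand side of the Mecke identity turns it into $\la^+\sum_{x\in\eta^+}G(\eta^+,\eta^-)\sum_{x'\in\eta^+\setminus x}a^+(x-x')\,k(\eta^+\setminus x,\eta^-)$ after one application (using $(\eta^+\setminus x)\cup x=\eta^+$), which is the second line of the claim. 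The hopping term $\la^+\int_\X\sum_{x'\in\eta^+}G(\eta^+\setminus x'\cup x,\eta^-)a^+(x-x')\,dx$ needs two applications: first absorb the sum over $x'$ into an $x'$-integration, then reabsorb the $x$-integration into a sum over $\eta^+$; the net effect replaces the birth variable of $G$ by a death of $x$ and an integration $\int_\X a^+(x-x')k(\eta^+\setminus x\cup x',\eta^-)\,dx'$ in $k$, which is the third line. Lines four and five follow identically with the $(-)$-variable in the role of the $(+)$-variable.

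The interaction terms are the only genuinely two-component ones. The term $\la\int_\X G(\eta^+\cup x,\eta^-)\sum_{y'\in\eta^-}a(x-y')\,dx$ requires only the $(+)$-identity and yields $\la\sum_{x\in\eta^+}\sum_{y\in\eta^-}a(x-y)\,k(\eta^+\setminus x,\eta^-)$, exactly as in the first $(+)$-birth case (line six). The final term $\la\int_\X\sum_{y'\in\eta^-}G(\eta^+\cup x,\eta^-\setminus y')a(x-y')\,dx$ is the delicate one, since it couples a birth of $x$ in the $(+)$-component with a death of $y'$ in the $(-)$-component; here I would apply the Mecke identity twice in succession, once in each component, converting the $x$-integration into a death $\eta^+\setminus x$ in $k$ and the $y'$-summation into a birth $\eta^-\cup y$ in $k$, which gives the last line $\la\sum_{x\in\eta^+}\int_\X a(x-y)k(\eta^+\setminus x,\eta^-\cup y)\,dy$. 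I expect the bookkeeping of this double application on the cross term---tracking which point is added and which removed in each component and checking that the integrability needed for the two successive uses of Fubini is preserved---to be the only point requiring genuine care; the remaining computations are routine repetitions of the two model cases above.
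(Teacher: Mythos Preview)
Your proof is correct and follows essentially the same route as the paper: both compute $\hat{L}^{\ast}$ by inserting the explicit form of $\hat{L}G$ into the duality relation \eqref{adjoint} and then applying the Mecke-type identities for the Lebesgue--Poisson measure to move the $x$-integrals and $x'$- (resp.\ $y'$-) sums from $G$ onto $k$. The paper merely records the three relevant Mecke identities and declares the result, whereas you have spelled out the term-by-term bookkeeping; there is no substantive difference in strategy.
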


\begin{proof}
We may use
the following corollaries of the classical  Mecke formula (see, e.g., \cite{AKR1}):
\begin{multline*}
\int_{\Ga _{0}^{2}}\sum_{x\in \eta ^{+}}h_{+}(x,\eta
^{+},\eta ^{-})d\la_{1}( \eta ^{+}) d\la
_1( \eta ^{-}) \\
\shoveright{
=\int_{\Ga
_{0}^{2}}\int_{\R^{d}}h_{+}(x,\eta ^{+}\cup x,\eta
^{-})dxd\la_{1}( \eta ^{+}) d\la
_1( \eta ^{-}),}
\\
\shoveleft{
\int_{\Ga _{0}^{2}}\sum_{y\in \eta ^{-}}h_{-}(y,\eta
^{+},\eta ^{-})d\la_{1}( \eta ^{+}) d\la
_1( \eta ^{-})} \\
\shoveright{
=\int_{\Ga
_{0}^{2}}\int_{\R^{d}}h_{-}(y,\eta ^{+},\eta ^{-}\cup
y)dyd\la_{1}( \eta ^{+}) d\la
_1( \eta ^{-}),}
\\
\shoveleft{
\int_{\Ga _{0}^{2}}\sum_{x\in \eta ^{+}}\sum_{y\in \eta
^{-}}h(x,\eta ^{+},\eta ^{-})d\la_{1}( \eta ^{+}) d\la
_1( \eta ^{-})}\\
=\int_{\Ga
_{0}^{2}}\int_{\R^{d}}\int_{\R^{d}}h(x,\eta
^{+}\cup x,\eta ^{-}\cup y)dxdyd\la_{1}( \eta ^{+}) d\la
_1( \eta ^{-}).
\end{multline*}

Then one can obtain the explicit formula for the operator $\hat{L}^{\ast }$ directly
from definition~\eqref{adjoint}.
\end{proof}

As a result, the statements of the Propositions~\ref{propeq1} and
\ref{propeq2} are directly follow from the
Proposition~\ref{prop_adj_oper} and \eqref{geneq}--\eqref{cf12}.

\subsection{Solution of the equations for time evolution of the correlation
functions}
To solve the equations from the Propositions~\ref{propeq1}
and \ref{propeq2} using classical
perturbation method we rewrite these equations in the following
forms:
\begin{align}
\frac{\partial k_{t}^{-}(y)}{\partial t} &=(\la
^{-}-1)k_{t}^{-}(y)+\la ^{-}(L^{-}k_{t}^{-})(y), \label{eq1-}\\
\frac{\partial k_{t}^{+}(x)}{\partial t} &=(\la
^{+}-1)k_{t}^{+}(x)+\la ^{+}(L^{+}k_{t}^{+})(x)+\la
\int_{\R^{d}}a(x-y)k_{t}^{-}(y)dy,\label{eq1+}
\end{align}
where Markov-type generators $L^\pm$ are
defined on functions on $\X$ by
\begin{align*}
(L^{-}f)(y) &=\int_{\R^{d}}a^{-}(y-y' )[f(y' )-f(y)]dy',  \\
(L^{+}f)(x) &=\int_{\R^{d}}a^{+}(x-x' )[f(x' )-f(x)]dx';
\end{align*}
and for the second order correlation functions:
\begin{align}
\frac{\partial k_{t}^{--}(y_{1},y_{2})}{\partial t}
&=2k_{t}^{--}(y_{1},y_{2})(\la ^{-}-1)+\la
^{-}(L_{1}^{--}k_{t}^{--})(y_{1},y_{2}) \notag\\
&\quad+\la ^{-}(L_{2}^{--}k_{t}^{--})(y_{1},y_{2})+\la
^{-}a^{-}(y_{1}-y_{2})[k_{t}^{-}(y_{1})+k_{t}^{-}(y_{2})],
\label{eq2--}\\[3ex] \frac{\partial
k_{t}^{+-}(x,y)}{\partial t} &=(\la ^{+}+\la
^{-}-2)k_{t}^{+-}(x,y)+\la ^{+}L_{1}^{+-}k_{t}^{+-}(x,y)+\la
^{-}L_{2}^{+-}k_{t}^{+-}(x,y) \notag\\
&\quad+\la a(x-y)k_{t}^{-}(y)+\la \int_{\R^{d}}a(x-y'
)k_{t}^{--}(y,y' )dy', \label{eq2+-}\\[3ex]
\frac{\partial k_{t}^{++}(x_{1},x_{2})}{\partial t}
&=2k_{t}^{++}(x_{1},x_{2})(\la ^{+}-1)+\la
^{+}L_{1}^{++}k_{t}^{++}(x_{1},x_{2})+\la
^{+}L_{2}^{++}k_{t}^{++}(x_{1},x_{2}) \notag\\
&\quad+\{\la ^{+}a^{+}(x_{1}-x_{2})[k_{t}^{+}(x_{1})+k_{t}^{+}(x_{2})] \notag\\
&\quad+\la
\int_{\R^{d}}a(x_{1}-y)k_{t}^{+-}(x_{2},y)dy+\la
\int_{\R^{d}}a(x_{2}-y)k_{t}^{+-}(x_{1},y)dy\},\label{eq2++}
\end{align}
where Markov-type generators $L_{i}^{\pm\pm}$, $i=1,2$ are defined
on functions on $\X\times\X$ by
\begin{align*}
(L_{1}^{--}f)(y_{1},y_{2})
&=\int_{\R^{d}}a^{-}(y_{1}-y' )[f(y_{2},y' )-f(y_{2},y_{1})]dy' , \\
(L_{2}^{--}f)(y_{1},y_{2}) &=\int_{\R^{d}}a^{-}(y_{2}-y'
)[f(y_{1},y' )-f(y_{1},y_{2})]dy',\\
(L_{1}^{+-}f)(x,y) &=\int_{\R^{d}}a^{+}(x-x' )[f(x' ,y)-f(x,y)]dx' , \\
(L_{2}^{+-}f)(x,y) &=\int_{\R^{d}}a^{-}(y-y' )[f(x,y' )-f(x,y)]dy',\\
(L_{1}^{++}f)(x_{1},x_{2})
&=\int_{\R^{d}}a^{+}(x_{1}-x' )[f(x_{2},x' )-f(x_{2},x_{1})]dx',  \\
(L_{2}^{++}f)(x_{1},x_{2}) &=\int_{\R^{d}}a^{+}(x_{2}-x'
)[f(x_{1},x' )-f(x_{1},x_{2})]dx'.
\end{align*}%

Next propositions are direct corollaries of the perturbation method
(note also that any Markov semigroup preserves constants).
\begin{proposition}
The solutions of \eqref{eq1-}--\eqref{eq1+}
with initial values \eqref{initval1} have
the following forms:
\begin{align}
k_{t}^{-}(y) &=c^{-}e^{t(\la ^{-}-1)}+e^{t(\la
^{-}-1)}e^{t\la ^{-}L^{-}}\psi ^{-}(y), \label{expr_c1-}\\
k_{t}^{+}(x) &=c^{+}e^{t(\la ^{+}-1)}+e^{t(\la
^{+}-1)}e^{t\la ^{+}L^{+}}\psi ^{+}(x)+\la
c^{-}e^{t(\la^{+}-1)}\int_{0}^{t}e^{\tau(\la^{-}-
\la^{+})}d\tau \label{expr_c1+}\\
&\phantom{{}=c^{+}e^{t(\la ^{+}-1)}}+\la
e^{t(\la^{+}-1)}\int_{0}^{t}e^{\tau(\la^{-}-%
\la^{+})}e^{(t-\tau)\la^{+}L^{+}}(a\ast (e^{\tau
\la^{-}L^{-}}\psi^{-}))(x)d\tau.\notag
\end{align}
\end{proposition}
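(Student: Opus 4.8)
The plan is to treat the two equations in order, since \eqref{eq1-} is a closed linear equation for $k_t^-$ while \eqref{eq1+} depends on $k_t^-$ through the convolution term.

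First I would solve \eqref{eq1-}. Writing the equation as $\partial_t k_t^- = (\la^- - 1)k_t^- + \la^- L^- k_t^-$, I note that the operator on the right is a constant $(\la^- - 1)$ plus $\la^- L^-$, where $L^-$ generates a Markov semigroup $e^{t\la^- L^-}$ (it has the form $\int a^-(y-y')[f(y')-f(y)]dy'$ with $a^-$ a probability density, so it is a bounded generator). Hence the full semigroup factorizes as $e^{t(\la^- - 1)}e^{t\la^- L^-}$, and applying it to the initial value $k_0^-(y) = c^- + \psi^-(y)$ gives
\begin{equation*}
k_t^-(y) = e^{t(\la^- - 1)}e^{t\la^- L^-}(c^- + \psi^-)(y).
\end{equation*}
Since $e^{t\la^- L^-}$ is Markov it preserves constants, so $e^{t\la^- L^-}c^- = c^-$, and I recover \eqref{expr_c1-}. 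This is the routine part.

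Next I would solve \eqref{eq1+} by Duhamel's principle (the "classical perturbation method" the paper refers to). The homogeneous part has generator $(\la^+ - 1) + \la^+ L^+$, with semigroup $e^{t(\la^+ - 1)}e^{t\la^+ L^+}$; the inhomogeneous forcing is $\la\,(a * k_t^-)(x)$. The variation-of-constants formula gives
\begin{equation*}
k_t^+ = e^{t(\la^+ - 1)}e^{t\la^+ L^+}k_0^+ + \la\int_0^t e^{(t-\tau)(\la^+ - 1)}e^{(t-\tau)\la^+ L^+}\bigl(a * k_\tau^-\bigr)\,d\tau.
\end{equation*}
The homogeneous term splits into $c^+ e^{t(\la^+-1)}$ (again using that the Markov semigroup fixes $c^+$) plus $e^{t(\la^+-1)}e^{t\la^+ L^+}\psi^+$, matching the first two summands of \eqref{expr_c1+}. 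Into the Duhamel integral I would substitute the explicit $k_\tau^-$ from \eqref{expr_c1-}, splitting $a * k_\tau^- = c^- e^{\tau(\la^--1)}\langle a\rangle + e^{\tau(\la^--1)}\,a * (e^{\tau\la^- L^-}\psi^-)$ and using the normalization $\langle a\rangle = 1$ on the constant piece.

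The main bookkeeping step is collecting the exponential prefactors: the constant piece contributes $\la c^- \int_0^t e^{(t-\tau)(\la^+-1)}e^{(t-\tau)\la^+L^+}\,c^-\langle a\rangle\,d\tau$, and since $e^{(t-\tau)\la^+L^+}$ fixes constants this reduces to $\la c^- e^{t(\la^+-1)}\int_0^t e^{-\tau(\la^+-1)}e^{\tau(\la^--1)}\,d\tau = \la c^- e^{t(\la^+-1)}\int_0^t e^{\tau(\la^- - \la^+)}\,d\tau$, which is exactly the third term of \eqref{expr_c1+}. The $\psi^-$ piece, after factoring $e^{t(\la^+-1)}$ out front and combining the $e^{(t-\tau)(\la^+-1)}e^{\tau(\la^--1)}$ factors into $e^{\tau(\la^- - \la^+)}$, yields the final term. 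The step most prone to sign or exponent errors is precisely this regrouping of the $e^{(t-\tau)(\la^+-1)}$ and $e^{\tau(\la^--1)}$ factors and confirming that the two non-commuting semigroups $e^{(t-\tau)\la^+L^+}$ and $e^{\tau\la^-L^-}$ land in the order displayed in \eqref{expr_c1+}; everything else is a direct appeal to the semigroup property and the Markov preservation of constants.
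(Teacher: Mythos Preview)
Your proposal is correct and is exactly the argument the paper has in mind: the paper's entire proof is the single remark that the formulas are ``direct corollaries of the perturbation method (note also that any Markov semigroup preserves constants),'' and you have simply unpacked that sentence---solving \eqref{eq1-} by applying the factored semigroup $e^{t(\la^--1)}e^{t\la^-L^-}$ to the initial datum, then feeding the result into the Duhamel formula for \eqref{eq1+} and using $e^{s\la^\pm L^\pm}1=1$ and $\langle a\rangle=1$ to isolate the constant pieces. (There is a harmless slip where you wrote $c^-$ twice in the constant Duhamel term, but the subsequent line is correct.)
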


\begin{proposition} Let \eqref{tr_inv} holds.
Then the solutions
of~\eqref{eq2--}--\eqref{eq2++} with initial values \eqref{initval2} have
the following forms:
\begin{align}
k_{t}^{--}&(y_{1},y_{2}) =e^{t2(\la ^{-}-1)}e^{t\la
^{-}L_{1}^{--}}e^{t\la ^{-}L_{2}^{--}}(c^{--}+\varphi
^{--}(y_{1}-y_{2}))\notag
\\
&+\int_{0}^{t}e^{(t-\tau )2(\la ^{-}-1)}e^{(t-\tau )\la
^{-}L_{1}^{--}}e^{(t-\tau )\la ^{-}L_{2}^{--}}\la
^{-}a^{-}(y_{1}-y_{2})[k_{\tau }^{-}(y_{1})+k_{\tau
}^{-}(y_{2})]d\tau, \label{expr_c2--}\\[3ex]
k_{t}^{+-}&(x,y) =e^{t(\la ^{+}+\la ^{-}-2)}e^{t\la
^{+}L_{1}^{+-}}e^{t\la ^{-}L_{2}^{+-}}(c^{+-}+\varphi ^{+-}(x-y)) \notag\\
&+\int_{0}^{t}e^{(t-\tau )(\la ^{+}+\la
^{-}-2)}e^{(t-\tau
)\la ^{+}L_{1}^{+-}}e^{(t-\tau )\la ^{-}L_{2}^{+-}}\notag\\
&\qquad\times \{\la a(x-y)k_{\tau }^{-}(y)+\la \int_{\R^{d}}a(x-y'
)k_{\tau }^{--}(y,y' )dy' \}d\tau,
\label{expr_c2+-}\\[3ex] k_{t}^{++}&(x_{1},x_{2})
=e^{t2(\la ^{+}-1)}e^{t\la ^{+}L_{1}^{++}}e^{t\la
^{+}L_{2}^{++}}(c^{++}+\varphi
^{++}(x_{1}-x_{2}))\notag\\
&+\int_{0}^{t}e^{(t-\tau )2(\la ^{+}-1)}e^{(t-\tau
)\la ^{+}L_{1}^{++}}e^{(t-\tau )\la ^{+}L_{2}^{++}}\{\la
^{+}a^{+}(x_{1}-x_{2})[k_{\tau }^{+}(x_{1})+k_{\tau }^{+}(x_{2})] \notag\\
&+\la \int_{\R^{d}}a(x_{1}-y)k_{\tau
}^{+-}(x_{2},y)dy+\la
\int_{\R^{d}}a(x_{2}-y)k_{\tau
}^{+-}(x_{1},y)dy\}d\tau \label{expr_c2++}
\end{align}%
\end{proposition}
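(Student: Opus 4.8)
The plan is to recognize each of the three equations \eqref{eq2--}, \eqref{eq2+-}, \eqref{eq2++} as a linear inhomogeneous Cauchy problem $\partial_t k_t = A k_t + f_t$, in which $A$ is the sum of a scalar multiple of the identity and two commuting Markov-type generators, while $f_t$ is a source term assembled solely from correlation functions that have already been determined. The system is triangular: the $^{--}$ equation couples only to the first-order function $k^{-}$; the $^{+-}$ equation couples to $k^{-}$ and $k^{--}$; and the $^{++}$ equation couples to $k^{+}$ and $k^{+-}$. I would therefore solve them in the order $k^{--}$, then $k^{+-}$, then $k^{++}$, at each step treating $f_t$ as a known, explicitly given function of $\tau$ obtained from the previous steps together with the first-order solutions \eqref{expr_c1-}, \eqref{expr_c1+}. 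The assumption \eqref{tr_inv} is what keeps the data and the dynamics inside the translation-invariant class: with $\psi^{\pm}\equiv 0$ the first-order functions reduce to spatially constant multiples of $e^{\tau(\la^{\pm}-1)}$, so the sources depend on $\tau$ only through these factors times the translation-invariant kernels $a,a^{\pm}$.

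For the homogeneous part, consider for instance \eqref{eq2--}, whose generator is $A=2(\la^{-}-1)\mathrm{Id}+\la^{-}L_1^{--}+\la^{-}L_2^{--}$. Each $L_i^{--}$ is a bounded Markov-type generator (a convolution operator minus the identity, since $\langle a^{-}\rangle=1$) acting on functions on $\X\times\X$, so it generates a norm-continuous Markov semigroup $e^{t\la^{-}L_i^{--}}$; because the two generators commute, the full semigroup factorizes as $e^{tA}=e^{2t(\la^{-}-1)}\,e^{t\la^{-}L_1^{--}}\,e^{t\la^{-}L_2^{--}}$, the scalar part contributing only an exponential weight. The analogous factorizations hold for the generators of \eqref{eq2+-} and \eqref{eq2++}. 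With the semigroup in hand, the variation-of-constants (Duhamel) formula gives $k_t=e^{tA}k_0+\int_0^t e^{(t-\tau)A}f_\tau\,d\tau$, which is exactly the asserted expression \eqref{expr_c2--}; inserting the factorized semigroup and the explicit sources reproduces \eqref{expr_c2+-} and \eqref{expr_c2++} verbatim.

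To make this rigorous I would fix a Banach space on which the correlation functions live, controlled by the decay and integrability assumptions \eqref{decayofa}, \eqref{intoffour}, verify that $A$ generates a $C_0$-semigroup there (immediate, as each $L_i$ is bounded, hence $A$ is bounded and the semigroups are norm-continuous), and check that each source $f_\tau$ is continuous in $\tau$ with values in that space; the latter follows since $f_\tau$ is built from the explicit, bounded, integrable lower-order functions convolved with the integrable kernels $a,a^{\pm}$. Standard linear-semigroup theory then guarantees that the variation-of-constants formula produces the unique classical solution of the Cauchy problem, so the displayed expressions are the solutions.

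The main obstacle, and the only genuinely non-routine point, is establishing that $L_1$ and $L_2$ commute in each of the three cases, so that the semigroup really factorizes into the product $e^{t\la L_1}e^{t\la L_2}$. I would verify this directly from the integral definitions: on the symmetric correlation functions the two operators act through convolution in complementary arguments, so their commutator vanishes, and the dynamics preserves symmetry, so restricting to this subspace is legitimate. Here the remark that every Markov semigroup preserves constants is also what makes the evaluation of $e^{tA}$ on the constant parts of $k_0$ and on the translation-invariant sources transparent. Everything else, namely convergence of the Duhamel integral and uniqueness, is routine once the commutation and boundedness are secured.
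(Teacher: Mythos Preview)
Your proposal is correct and matches the paper's approach exactly: the paper states only that these propositions ``are direct corollaries of the perturbation method (note also that any Markov semigroup preserves constants),'' which is precisely the Duhamel/variation-of-constants argument you outline. Your concern about commutation of $L_1$ and $L_2$ is not really an obstacle---in each case the two operators act through convolutions in separate variables (the first versus the second coordinate), so they commute trivially on all functions, not just symmetric ones.
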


\subsection{Technical lemmas} In this subsection
we present several useful notations and notes
and prove
technical lemmas needed in the sequel.  Let us define\begin{alignat}{2}
\mu ^{+} &:=\la ^{+}-1,
&\qquad \mu ^{-} &:=\la ^{-}-1, \label{def_mu}\\
f^{+}(p)& :=\la ^{+}\hat{a}^{+}(p)-1,
&\qquad f^{-}(p)& :=\la ^{-}\hat{a}^{-}(p)-1.\label{def_f}
\end{alignat}

Note that conditions $0<\la^\pm\leq1$ equivalent
to $ -1<\mu ^{\pm }\leq 0$
and $\mu ^{\pm }=0$ only if $\la ^{\pm }=1$. Recall that $a^\pm$ are positive, even and
normalized. Then
\begin{equation}\label{apmineq}
\hat{a}^\pm(p)=\int_{\R^{d}}\cos (p,x)a^\pm(x)dx,
\qquad |\hat{a}^\pm(p)|\leq 1,
\end{equation}
and $\hat{a}^\pm(p)=1$
only at $p=0$.
Hence, the conditions $0<\la^\pm\leq1$ imply
\begin{equation}
-\la^{\pm}-1\leq f^{\pm }(p)\leq  \mu ^{\pm }\leq0,\label{flessmu}
\end{equation}
and $f^{\pm }(p)=\mu ^{\pm }$ only at point
$p=0$.

Let $C^-(\X)$ be a set of non-positive continuous  functions on $\X$ which equal to $0$ only
 on
 countable sets. Since Fourier image of
 integrable function is continuous one has
 $f^\pm\in C^-(\X)$. For any $f\in C^-(\X)$
 define two closed sets
\begin{equation}
\D[\pm]_f:=\{x\in\X :f(x)=f^\pm(x)\}.
\end{equation}
Note that that set $\X\setminus\D[+]_{f^-}=\X\setminus\D[-]_{f^+}$
has zero Lebesgue measure only if
$\la^+\hat{a}^+\equiv\la^-\hat{a}^{-}$ and, hence, $\la^+=\la^-$.

\begin{lemma}\label{int_a} Let $d\geq 3$ and $b\in L^1(\X)\cap
L^\infty(\X)$.Then
\begin{equation*}
c^\pm(p)=\frac{b(p)}{\hat{a}^\pm (p)-1}
\end{equation*}
are integrable functions on $\X$.
\end{lemma}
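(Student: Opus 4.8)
The plan is to integrate $|c^\pm|$ over $\X$ by splitting the domain at the single point where the denominator degenerates. By \eqref{apmineq} one has $\hat a^\pm(p)=1$ only at $p=0$, so $1-\hat a^\pm$ is strictly positive on $\X\setminus\{0\}$ and the whole difficulty sits at the origin. I would fix a small ball $B_\eps:=\{p\in\X:|p|<\eps\}$, bound $\int_{B_\eps}|c^\pm|\,dp$ using $b\in L^\infty(\X)$, and bound $\int_{\X\setminus B_\eps}|c^\pm|\,dp$ using $b\in L^1(\X)$.

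Near the origin the key point is a quadratic lower bound for $1-\hat a^\pm$. The decay assumption \eqref{decayofa} with $\delta>2d$ guarantees a finite second moment $\int_\X|x|^2a^\pm(x)\,dx<\infty$, so $\hat a^\pm$ is $C^2$ near $0$; since $a^\pm$ is even, its gradient vanishes at $0$ and
\[
1-\hat a^\pm(p)=\int_\X\bigl(1-\cos(p,x)\bigr)a^\pm(x)\,dx=\tfrac12(p,Mp)+o(|p|^2),\qquad p\to0,
\]
where $M=(M_{ij})$, $M_{ij}=\int_\X x_ix_j\,a^\pm(x)\,dx$. The matrix $M$ is positive definite: were $(v,x)=0$ for some $v\neq0$ and $a^\pm$-almost every $x$, then $a^\pm$ would be concentrated on a hyperplane, which is impossible for an $L^1$-density with $\langle a^\pm\rangle=1$. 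Hence $\tfrac12(p,Mp)\geq c_0|p|^2$ with $c_0>0$, and absorbing the remainder gives $1-\hat a^\pm(p)\geq c|p|^2$ for all $|p|\leq\eps$ (after shrinking $\eps$). Consequently $|c^\pm(p)|\leq\|b\|_\infty\,c^{-1}|p|^{-2}$ on $B_\eps$, and in polar coordinates $\int_{B_\eps}|p|^{-2}\,dp=\mathrm{const}\cdot\int_0^\eps r^{d-3}\,dr<\infty$ precisely because $d\geq 3$. This is exactly where the dimensional hypothesis enters.

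Away from the origin I would show $m_\eps:=\inf_{|p|\geq\eps}\bigl(1-\hat a^\pm(p)\bigr)>0$. On each annulus $\eps\leq|p|\leq R$ the continuous, strictly positive function $1-\hat a^\pm$ attains a positive minimum, while for $|p|\geq R$ with $R$ large the Riemann--Lebesgue lemma (valid since $a^\pm\in L^1(\X)$) yields $|\hat a^\pm(p)|\leq\tfrac12$, so $1-\hat a^\pm(p)\geq\tfrac12$. Thus $|c^\pm(p)|\leq m_\eps^{-1}|b(p)|$ on $\X\setminus B_\eps$, whence $\int_{\X\setminus B_\eps}|c^\pm|\,dp\leq m_\eps^{-1}\|b\|_{L^1}<\infty$. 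Adding the two contributions gives $c^\pm\in L^1(\X)$.

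The main obstacle is the quadratic lower bound at the origin: it rests on the finite second moment supplied by \eqref{decayofa} and on the nondegeneracy of $M$ (equivalently, that $a^\pm$ is not supported on a hyperplane). Once this bound is established, the integrability of $|p|^{-2}$ on $B_\eps$ for $d\geq3$ together with the crude outer estimate finishes the argument; the remaining steps are routine.
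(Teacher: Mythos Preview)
Your proof is correct and follows essentially the same route as the paper: split the integral into a small ball around the origin and its complement, use the finite second moment from \eqref{decayofa} to get the quadratic behaviour $1-\hat a^\pm(p)\sim \tfrac12\int_\X(p,x)^2 a^\pm(x)\,dx$ near $0$, and bound the denominator away from zero on the complement. If anything, you are more careful than the paper in two places: you justify the nondegeneracy of the covariance matrix $M$ (the paper simply writes $\sim-\tfrac12|p|^2$), and you invoke Riemann--Lebesgue to handle large $|p|$, whereas the paper just asserts $|\hat a^\pm(p)-1|$ is bounded from below outside the neighbourhood.
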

\begin{proof}By \eqref{apmineq}, $\hat{a}^\pm (0)=1$. Due to \eqref{decayofa}, $a^\pm$
has at least first and second finite moments.
Then using~\eqref{apmineq} one has
 in some neighbourhood of the origin
\begin{align*}
\hat{a}^\pm (p)-1 =\int_{\R^{d}}[\cos (p,x)-1]a^\pm(x)dx
\sim -\frac{1}{2}\int_{\R^{d}}( p,x)^{2}a^{\pm}(x)dx
\sim -\frac{1}{2} \vert p\vert^2
\end{align*}
and outside of this neighbourhood $\vert
\hat{a}^\pm (p)-1 \vert$ are bounded from
below.

Hence, $c^\pm$ are integrable in this neighbourhood
since $b$ is bounded and $\dfrac{1}{\vert p\vert^2}\in
L^1(\X)$ for $d\geq 3$; and $c^\pm$ are integrable outside of this neighbourhood since $b$ is
integrable.
\end{proof}

\begin{lemma}\label{boundint} Let $d\geq 3$, $0<\la^\pm\leq1$,
and $b\in L^1(\X)\cap
L^\infty(\X)$.Then
for any $f\in C^-(\X)$
\begin{equation*}
d^\pm(p)= b(p)\sup_{t\geq0} \frac{e^{tf^{}(p)}-e^{tf^\pm(p)}}
{f(p)-f^\pm(p)}
\end{equation*}
are integrable functions on $\X\setminus\D[\pm]_f$.
\end{lemma}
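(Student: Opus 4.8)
The plan is to collapse the supremum in $t$ into a single $t$-independent bound that no longer involves the auxiliary function $f$, thereby reducing the whole statement to Lemma~\ref{int_a}. The starting point is the elementary integral representation
\begin{equation*}
\frac{e^{tf(p)}-e^{tf^{\pm}(p)}}{f(p)-f^{\pm}(p)}=\int_{0}^{t}e^{sf(p)}e^{(t-s)f^{\pm}(p)}\,ds,\qquad t\ge0,
\end{equation*}
valid for every $p\in\X\setminus\D[\pm]_f$ (where $f(p)\neq f^{\pm}(p)$), which one checks by evaluating the right-hand integral directly.

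First I would exploit $f(p)\le0$, so that $e^{sf(p)}\le1$ for $s\ge0$, in order to discard the first exponential factor and integrate the second:
\begin{equation*}
0\le\frac{e^{tf(p)}-e^{tf^{\pm}(p)}}{f(p)-f^{\pm}(p)}=\int_{0}^{t}e^{sf(p)}e^{(t-s)f^{\pm}(p)}\,ds\le\int_{0}^{t}e^{(t-s)f^{\pm}(p)}\,ds=\frac{1-e^{tf^{\pm}(p)}}{-f^{\pm}(p)}\le\frac{1}{-f^{\pm}(p)},
\end{equation*}
where the final step uses $f^{\pm}(p)\le0$. This estimate is uniform in $t\ge0$, so passing to the supremum and recalling that $-f^{\pm}(p)=1-\la^{\pm}\hat{a}^{\pm}(p)$ yields
\begin{equation*}
|d^{\pm}(p)|\le\frac{|b(p)|}{1-\la^{\pm}\hat{a}^{\pm}(p)}\qquad\text{on }\X\setminus\D[\pm]_f.
\end{equation*}
Measurability of $d^{\pm}$ causes no difficulty: since $t\mapsto(e^{tf(p)}-e^{tf^{\pm}(p)})/(f(p)-f^{\pm}(p))$ is continuous, the supremum may be taken over rational $t$, so $d^{\pm}$ is a countable supremum of measurable functions.

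It then remains to integrate the right-hand bound over $\X\setminus\D[\pm]_f$, and here I would distinguish two cases according to the value of $\la^{\pm}$. If $\la^{\pm}<1$, then $|\hat{a}^{\pm}|\le1$ forces $1-\la^{\pm}\hat{a}^{\pm}(p)\ge1-\la^{\pm}>0$, so the bound is dominated by $(1-\la^{\pm})^{-1}|b(p)|$, which is integrable because $b\in L^1(\X)$. If $\la^{\pm}=1$, the bound equals $|b(p)|/(1-\hat{a}^{\pm}(p))$, that is $-|b(p)|/(\hat{a}^{\pm}(p)-1)$; since $|b|\in L^1(\X)\cap L^\infty(\X)$, Lemma~\ref{int_a} applies (this is the only place where the hypothesis $d\ge3$ enters, through $|p|^{-2}\in L^1_{\mathrm{loc}}(\X)$) and gives integrability on all of $\X$, a fortiori on $\X\setminus\D[\pm]_f$.

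The one substantive point is the first step: producing a bound on the supremum that is uniform in $t$ and clean enough to match Lemma~\ref{int_a}. The integral representation together with the crude estimate $e^{sf(p)}\le1$ is precisely what makes the $t$-dependence disappear into the harmless factor $1-e^{tf^{\pm}(p)}\le1$; everything after that is routine.
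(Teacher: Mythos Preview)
Your proof is correct and reaches exactly the same pointwise bound as the paper, namely
\[
\sup_{t\ge0}\frac{e^{tf(p)}-e^{tf^{\pm}(p)}}{f(p)-f^{\pm}(p)}\le\frac{1}{-f^{\pm}(p)},
\]
followed by the identical case split $\la^{\pm}<1$ versus $\la^{\pm}=1$ and the appeal to Lemma~\ref{int_a}. The difference is purely in how that bound is obtained. The paper sets $a=f(p)$, $b=f^{\pm}(p)$, differentiates $h(t)=(e^{ta}-e^{tb})/(a-b)$, locates the unique maximizer $t_0=\frac{1}{b-a}\ln\frac{a}{b}$, and evaluates $h(t_0)=-\frac{1}{b}e^{t_0a}<-\frac{1}{b}$. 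Your integral representation $\int_0^t e^{sf(p)}e^{(t-s)f^{\pm}(p)}\,ds$ together with $e^{sf(p)}\le1$ reaches the same conclusion in one line and, as a bonus, makes the non-negativity of the quotient immediate. Your route is shorter and avoids the calculus computation entirely; the paper's route has the minor advantage of producing the exact maximum (with the strict inequality coming from the extra factor $e^{t_0a}<1$), but that sharpness is never used downstream.
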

\begin{proof}
Let $p\in\X\setminus\D[+]_f$ for example.
Without loss of generality assume that $p\neq0$
and $f(p)\neq0$. Set $a=f(p), b=f^+(p)$.
Then $a<0$, $b<0$, $a\neq b$. Let us define
\begin{equation*}
h(t):=\dfrac{e^{ta}-e^{tb}}{a-b}, \quad
t\geq0.
\end{equation*}
Clearly, $h(t)\geq 0$ and $h(t)=0$ only at
$t=0$. One has
\begin{equation*}
h'(t):=\dfrac{be^{ta}\Bigl(\dfrac{a}{b}-e^{t(b-a)}\Bigr)}{a-b}.
\end{equation*}
Set $t_0=\dfrac{1}{b-a}\ln\dfrac{a}{b}$.
If $0>a>b$ then $t_0>0$ and for $0<t<t_0$
we have $e^{t(b-a)}>\dfrac{a}{b}$, hence,
$h'(t)>0$; for $t>t_0$ one has $h'(t)<0$.
If $0>b>a$ then $t_0>0$ also and for $0<t<t_0$
we obtain $e^{t(b-a)}<\dfrac{a}{b}$, therefore,
$h'(t)>0$; for $t>t_0$ again $h'(t)<0$. As~a~result,
\begin{equation*}
\max_{[0;\infty)}h(t)=h(t_0)=\dfrac{e^{t_0a}(1-e^{t_0(b-a)})}{a-b}=\frac{e^{t_0a}\Bigl(1-\dfrac{a}{b}\Bigr)}{a-b}=-\frac{1}{b}e^{t_0a}<-\frac{1}{b},
\end{equation*}
since $-b>0$, $a<0$.

Hence, for any $p\in\X\setminus\D[+]_f$,
$t\geq0$
\begin{equation*}
0\leq\frac{e^{tf^{}(p)}-e^{tf^+(p)}}
{f(p)-f^+(p)}<-\frac{1}{f^+(p)}.
\end{equation*}

Then using \eqref{flessmu}, \eqref{def_mu} for
$\la^+<1$ one has $\mu^+<0$ and
$d^+(p)< \dfrac{b(p)}{-\mu^+}$ that imply
the statement of this Lemma. For $\la^+=1$
the result is followed from Lemma~\ref{int_a}.
\end{proof}

\subsection{Asymptotic behaviour of the first
order correlation functions}
In this subsection we prove the Theorem~\ref{thm_as_1}.

\step{1} We should use~\eqref{expr_c1-}.
Note that $\psi^{-}\in L^1(\X)$ and Markov
semigroup maps $L^1(\X)$ into $L^1(\X)$.
Then using inverse Fourier transform one
has
\begin{equation}\label{invF1}
\bigl(e^{t\la ^{-}L^{-}}\psi^{-}\bigr)(y
) =
c_{d}\int_{\R^{d}}e^{i(p,y)}e^{t\la^{-}(\hat{a}^{-}(p)-1)}\hat{\psi}
^{-}\left( p\right) dp,
\end{equation}
where $c_d:=\dfrac{1}{(2\pi n)^d}$. Using
\eqref{apmineq}, the expression in the integral
in~\eqref{invF1} goes to $0$ for any $y$
and a.a. $p$. Since $\hat{\psi}^{-}\in L^1(\X)$
and $\left\vert e^{i(p,y)}e^{t\la^{-}(\hat{a}^{-}(p)-1)}\right\vert\leq
1$ one has that the~integral
also goes to $0$ for any $y$. Then the statement
is directly followed from~\eqref{expr_c1-}.

\step{2} We will use~\eqref{expr_c1+}. Note that similarly to the
first step $e^{t \la^{+}L^{+}}\psi^{+}\to 0$ point-wisely.

\step{2.1} If $\la ^{+}>1$ then for any
$\la^{-}>0$
\begin{equation*}
k_{t}^{+}\left( x\right) \rightarrow \infty ,
\end{equation*}%
since $\psi^{-}\geq \alpha^--c^{-}>-c^-$, hence, the last
term in~\eqref{expr_c1+} is bigger than
\begin{equation*}
-\la c^{-}
e^{t(\la^{+}-1)}\int_{0}^{t}e^{\tau(\la^{-}-\la^{+})}d\tau
\end{equation*}
and, therefore,
\begin{equation*}
k_{t}^{+}(x) >  c^{+}e^{t(\la ^{+}-1)}+e^{t(\la
^{+}-1)}e^{t\la ^{+}L^{+}}\psi ^{+}(x)\rightarrow\infty
\end{equation*}

\step{2.2} Let now $\la ^{+}\leq  1$. Divide proof on several
sub-steps. \step{2.2.1} Suppose $\la ^{+}=\la ^{-}=\nu$ then using~\eqref{expr_c1+} one
has
\begin{equation}\label{dop22}
k_{t}^{+}\left( x\right) =e^{t\left( \nu -1\right)
}c^{+}+e^{t\left( \nu
-1\right) }e^{t\nu L^{+}}\psi ^{+}\left( x\right) +\la e^{t\left( \nu -1\right) }c^{-}t + u_{t}(x)
\end{equation}
where
\begin{align*}
u_{t}(x)&=\la e^{t\left( \nu -1\right) }\int_{0}^{t}e^{\left( t-\tau
\right) \nu L^{+}}\left( a\ast (e^{\tau \nu L^{-}}\psi ^{-})\right)
\left( x\right) d\tau.
\end{align*}
Let us find $\lim\limits_{t\rightarrow\infty}u_{t}(x)$, for $\nu
\leq  1$. Note that $u_{t}\in L^1(\X)$ since semigroup and
convolution preserve integrability. Hence, we may compute the
Fourier transform of $u_t$:
\begin{equation}\label{hatut}
\hat{u}_{t}(p)= \left\{
\begin{array}{ll}
\la \hat{a}(p)\hat{\psi}^{-}(p)e^{tf^+(p)}t, & p\in\D[+]_{f^-},\\[2ex]
\la \hat{a}(p)\hat{\psi}^{-}(p)\dfrac{e^{tf^-(p)}-
e^{tf^+(p)}}
{f^{-}(p)-f^{+}(p)},  & p\in\X\setminus\D[+]_{f^-}.
\end{array}
\right.
\end{equation}

Since $\hat{\psi}^-$ is bounded and $\hat{a}$ is bounded and integrable due to \eqref{decayofa}
one can apply Lemma~\ref{boundint}, hence,
$\hat{u}_{t}(p)$ has integrable majorant
on $\X\setminus\D[+]_{f^-}$. Since $e^{ta}t<-\dfrac{e^{-1}}{a}$
for any $t\geq0$, $a<0$ one has for any $p\in\D[+]_{f^-}\setminus\{0\}$
\[
\bigl\vert\hat{u}_{t}(p)\bigr\vert\leq c_1\Biggl\vert
\frac{\hat{a}(p)}{f^+(p)}\Biggr\vert.
\]
Again if $\nu<1$ then denominator is separated
from zero, otherwise one can apply Lemma~\ref{int_a}.
As a result, $\hat{u}_{t}(p)$ has integrable majorant on whole $\X$ and pointwisely goes
to $0$ as $t\to\infty$ (except case $\nu=1$,
$p=0$).
Therefore, using majorized convergence theorem
the inverse Fourier transform of $\hat{u}_{t}(p)$ converges to zero, i.e. pointwisely $u_{t}(x)\to0$
as $t\to\infty$.

Thus, using~\eqref{dop22} one has that
$k_{t}^{+}\to\infty$ if $\nu=1$ and
$k_{t}^{+}\to0$ if $\nu<1$.

\step{2.2.2} Let now $\la ^{+}\neq \la
^{-}$. Using~\eqref{expr_c1+} obtain
\begin{align}
k_{t}^{+}\left( x\right) &= c^{+}e^{t\left( \la ^{+}-1\right)
}+e^{t\left( \la ^{+}-1\right) }e^{t\la ^{+}L^{+}}\psi
^{+}\left(
x\right) \notag\\
&\quad+\la c^{-}\frac{1}{\la ^{-}-\la ^{+}}\left( e^{t\left(
\la
^{-}-1\right) }-e^{t\left( \la ^{+}-1\right) }\right) \label{aaa}\\
&\quad+\la e^{t\left( \la ^{+}-1\right) }\int_{0}^{t}e^{\tau
\left( \la ^{-}-\la ^{+}\right) }e^{\left( t-\tau \right)
\la ^{+}L^{+}}\left( a\ast e^{\tau \la ^{-}L^{-}}\psi
^{-}\right) \left( x\right) d\tau .\notag
\end{align}

\step{2.2.2.1} Suppose that $\la ^{-}>1$.
Then since $\la ^{+}\leq 1$ and $\psi^-\geq\alpha^--c^-
>0$ we obtain that
\begin{equation*}
k_{t}^{+}\left( x\right) \rightarrow \infty ,\text{~~~}t\rightarrow \infty
\end{equation*}

\step{2.2.2.2} Next, let  $\la ^{-}<1,\ \la ^{+}<1$. Since $\hat{\psi}^-$ is bounded
one has for $M=\sup_\X\vert\hat{\psi}^-\vert$
that the last term in~\eqref{aaa} is not bigger (by absolute value) than
\begin{equation*}
 \frac{M}{\la^{-}-\la^{+}}\left(e^{t(%
\la^{-}-1)}-e^{t(\la^{+}-1)}\right)\to 0.
\end{equation*}
Then due to~\eqref{aaa} $k_{t}^{+}\left( x\right) \to 0$.

\step{2.2.2.3} Finally, let $\la ^{-}<1,~\la ^{+}=1$ or
$\la ^{-}=1,~\la ^{+}<1$. The last term in~\eqref{aaa}
is integrable function since semigroup and
convolution preserve integrability.
By direct computation its Fourier transform
has form~\eqref{hatut}. Hence, this last
term pointwisely goes to $0$.

As a result, by~\eqref{aaa} we obtain that
if $\la ^{+}=1$, $\la ^{-}<1$
\begin{equation*}
k_{t}^{+}\left( x\right) \rightarrow c^{+}+\frac{\la c^{-}}{1-\la
^{-}},\quad t\rightarrow \infty;
\end{equation*}
and if $\la ^{+}<1$, $\la ^{-}=1$%
\begin{equation*}
k_{t}^{+}\left( x\right) \rightarrow \frac{\la c^{-}}{1-\la ^{+}},\quad t\rightarrow \infty.
\end{equation*}

Theorem~\ref{thm_as_1} is proved.

\subsection{Asymptotic behaviour of the second
order correlation functions}
In this subsection we prove the Theorem~\ref{thm_as_2}.

First of all we present explicit expressions for $\Omega ^{++},~\Xi
^{--},~\Xi ^{+-},~\Xi ^{++}$, and after that we prove the Theorem. These functions are
inverse Fourier transforms of the following
\begin{align}
\omega ^{++}(p) &=\frac{\la ^{-}+\la -1}{\la ^{-}-1}\cdot \frac{%
c^{+}\hat{a}^{+}\left( p\right) }{1-\hat{a}^{+}\left( p\right) } ,\label{a1}\\
\xi ^{--}(p) &=\frac{c^{-}\hat{a}^{-}\left( p\right)
}{1-\hat{a}^{-}\left(
p\right) } ,\label{a2}\\
\xi ^{+-}(p) &=\frac{1}{2}\cdot \frac{\mu ^{-}+2}{2-\la ^{+}\hat{a}%
^{+}\left( p\right) -\hat{a}^{-}\left( p\right) }\cdot \frac{c^{-}\la
\hat{a}\left( p\right) }{1-\hat{a}^{-}\left( p\right) } ,\label{a3}\\
\xi ^{++}(p) &=\frac{\la }{1-\la ^{+}\hat{a}^{+}(p)}\left( \frac{%
\la ^{+}c^{-}\hat{a}^{+}(p)}{1-\la ^{+}}+\frac{\la c^{-}}{%
2-\la ^{+}\hat{a}^{+}(p)-\hat{a}^{+}(p)}\cdot \frac{\hat{a}^{2}(p)}{1-%
\hat{a}^{-}(p)}\right),\label{a4}
\end{align}
correspondingly.

Let us introduce the following denotations for the Markov semigroups
\begin{equation*}
T_{t}^{11}=e^{t\la ^{+}L_{1}^{++}},~~T_{t}^{12}=e^{t\la
^{+}L_{2}^{++}},~~T_{t}^{13}=e^{t\la ^{+}L_{1}^{+-}},
\end{equation*}%
\begin{equation*}
T_{t}^{21}=e^{t\la ^{-}L_{1}^{--}},~~T_{t}^{22}=e^{t\la
^{-}L_{2}^{--}},~~T_{t}^{23}=e^{t\la ^{-}L_{1}^{+-}}.
\end{equation*}

We start with trivial remark that for any
even functions $c,g\in L^1(\X)$
\begin{equation*}
\left( L_{1}g\right) \left( x_{1}-x_{2}\right) =\left( L_{2}g\right) \left(
x_{1}-x_{2}\right),
\end{equation*}
where
\begin{align*}
(L_{1}f)(x_{1},x_{2})
&:=\int_{\R^{d}}c(x_{1}-x' )[f(x_{2},x' )-f(x_{2},x_{1})]dx',  \\
(L_{2}f)(x_{1},x_{2}) &:=\int_{\R^{d}}c(x_{2}-x' )[f(x_{1},x'
)-f(x_{1},x_{2})]dx'.
\end{align*}

After transformations, substitutions and simplifying we obtain
for~\eqref{expr_c2--}--\eqref{expr_c2++}
the~following representations:
\begin{align*}
k_{t}^{--}(y_{1},y_{2})&=c^{--}e^{2\mu^{-}t}+e^{2%
\mu^{-}t}T_{t}^{21}T_{t}^{22}\varphi
^{--}(y_{1}-y_{2})+U_{t}^{--}(y_{1}-y_{2}),\\
k_{t}^{+-}(x,y) &=\left( c^{+-}-\frac{\la c^{--}}{\mu ^{-}-\mu ^{+}}%
\right) e^{(\mu^{+}+\mu^{-})t}+\frac{\la c^{--}}{\mu ^{-}-\mu ^{+}}%
e^{2\mu^{-}t} \\
&\quad+e^{(\mu^{+}+\mu^{-})t}T_{t}^{13}T_{t}^{23}\varphi
^{+-}(x-y)+U_{t}^{+-}(x-y),\\
k_{t}^{++}(x_{1},x_{2}) &=\left( c^{++}-\frac{2\la c^{+-}}{\mu
^{-}-\mu ^{+}}+\frac{\la ^{2}c^{--}}{(\mu ^{-}-\mu
^{+})^{2}}\right) e^{2\mu^{+}t}
\\
&\quad+\left( \frac{2\la c^{+-}}{\mu ^{-}-\mu ^{+}}-\frac{2\la ^{2}c^{--}%
}{(\mu ^{-}-\mu ^{+})^{2}}\right) e^{(\mu^{+}+\mu^{-})t} \\
&\quad+\frac{\la ^{2}c^{--}}{(\mu ^{-}-\mu^{+})^{2}}e^{2\mu^{-}t} \\
&\quad+e^{2\mu^{+}t}T_{t}^{11}T_{t}^{12}\varphi
^{++}(x_{1}-x_{2})+U_{t}^{++}(x_{1}-x_{2}).
\end{align*}
Here
\begin{equation*}
U_{t}^{--}(y_{1}-y_{2})=2\la ^{-}c^{-}\int_{0}^{t}e^{\mu ^{-}\tau
}e^{2\mu ^{-}(t-\tau )}T_{t-\tau }^{21}T_{t-\tau
}^{22}a^{-}(y_{1}-y_{2})d\tau,
\end{equation*}\\*[-2\parindent]
\begin{align*}
&U_{t}^{+-}(x-y) \\
&=\la c^{-}\int_{0}^{t}e^{\mu ^{-}\tau }e^{(\mu ^{+}+\mu
^{-})(t-\tau )}T_{t-\tau }^{13}T_{t-\tau }^{23}a(x-y)d\tau  \\
&\quad+\la \int_{0}^{t}e^{2\mu ^{-}\tau }e^{(\mu ^{+}+\mu
^{-})(t-\tau )}T_{t-\tau }^{13}T_{t-\tau }^{23}\int_{\R^{d}}a(x-y'
)T_{\tau }^{21}T_{\tau }^{22}\varphi
^{--}(y-y' )dy' d\tau  \\
&\quad+2c^{-}\la \la ^{-}\int_{0}^{t}e^{(\mu ^{+}+\mu ^{-})(t-\tau
)}T_{t-\tau }^{13}T_{t-\tau }^{23}\\&\qquad\times\int_{\R^{d}}a(x-y'
)\int_{0}^{\tau }e^{\mu ^{-}s}e^{2\mu ^{-}(\tau -s)}T_{\tau
-s}^{21}T_{\tau -s}^{22}a^{-}(y-y' )dsdy' d\tau,
\end{align*}\\*[-2\parindent]
\begin{align*}
&U_{t}^{++}(x_{1}-x_{2}) \\
&=2\la ^{+}c^{+}\int_{0}^{t}e^{\mu ^{+}\tau }e^{2\mu
^{+}(t-\tau
)}T_{t-\tau }^{11}T_{t-\tau }^{12}a^{+}(x_{1}-x_{2})d\tau  \\
&\quad+2\la \la ^{+}c^{-}\int_{0}^{t}e^{2\mu ^{+}(t-\tau
)}T_{t-\tau }^{11}T_{t-\tau
}^{12}a^{+}(x_{1}-x_{2})\int_{0}^{\tau
}e^{\mu ^{-}s}e^{\mu ^{+}(\tau -s)}dsd\tau  \\
&\quad+2\la \int_{0}^{t}e^{(\mu ^{+}+\mu ^{-})\tau }e^{2\mu
^{+}(t-\tau )}T_{t-\tau }^{11}T_{t-\tau
}^{12}\int_{\R^{d}}a(x_{1}-y)T_{\tau }^{13}T_{\tau
}^{23}\varphi
^{+-}(x_{2}-y)dyd\tau  \\
&\quad+2\la ^{2}c^{-}\int_{0}^{t}e^{2\mu ^{+}(t-\tau
)}T_{t-\tau }^{11}T_{t-\tau
}^{12}\int_{\R^{d}}a(x_{1}-y)\\&\qquad\times\int_{0}^{\tau
}e^{\mu ^{-}s}e^{(\mu ^{+}+\mu ^{-})(\tau -s)}T_{\tau
-s}^{13}T_{\tau
-s}^{23}a(x_{2}-y)dydsd\tau  \\
&\quad+2\la ^{2}\int_{0}^{t}e^{2\mu ^{+}(t-\tau )}T_{t-\tau
}^{11}T_{t-\tau
}^{12}\int_{\R^{d}}a(x_{1}-y)\int_{0}^{\tau
}e^{2\mu ^{-}s}e^{(\mu ^{+}+\mu ^{-})(\tau -s)}T_{\tau
-s}^{13}T_{\tau
-s}^{23} \\
&\qquad \times \int_{\R^{d}}a(x_{2}-y' )T_{s}^{21}T_{s}^{22}\varphi ^{--}(y-y' )dy' dsdyd\tau  \\
&\quad+4\la ^{-}c^{-}\la ^{2}\int_{0}^{t}e^{2\mu
^{+}(t-\tau )}T_{t-\tau }^{11}T_{t-\tau
}^{12}\int_{\R^{d}}a(x_{1}-y)\\&\qquad\times\int_{0}^{\tau
}e^{(\mu ^{+}+\mu ^{-})(\tau -s)}T_{\tau -s}^{13}T_{\tau
-s}^{23}\int_{\R^{d}}a(x_{2}-y' ) \\
&\qquad \times \int_{0}^{s}e^{\mu ^{-}\theta }e^{2\mu ^{-}(s-\theta
)}T_{s-\theta }^{21}T_{s-\theta }^{22}a^{-}(y-y' )d\theta dy'
dsdyd\tau.
\end{align*}

Since semigroups and convolutions preserve integrability we have that $T_{t}^{21}T_{t}^{22}\varphi^{--}$,
$T_{t}^{13}T_{t}^{23}\varphi^{+-}$, $T_{t}^{11}T_{t}^{12}\varphi^{++}$
as well as $U_{t}^{--},~U_{t}^{+-}$ and $U_{t}^{++}$ are integrable on $\R^{d}$ functions. So, to find their limits as $t\to\infty$ we may use the Fourier transforms.

Namely,\begin{align*}
T_{t}^{21}T_{t}^{22}\varphi
^{--}(y_1-y_2)&=c_{d}\int%
\limits_{\R^{d}}e^{ip(y_1-y_2)}e^{2(f^{-}(p)-\mu^-)t}\hat{\varphi
}^{--}(p)dp,
\\
T_{t}^{13}T_{t}^{23}\varphi ^{+-}(x-y)&=c_{d}\int%
\limits_{\R^{d}}e^{ip(x-y)}e^{(f^{+}(p)-\mu^+)t}
e^{(f^{-}(p)-\mu^-)t}\hat{\varphi} ^{+-}(p)dp,
\\
T_{t}^{11}T_{t}^{12}\varphi ^{++}(x_1-x_2)&=c_{d}\int%
\limits_{\R^{d}}e^{ip(x_1-x_2)}e^{2(f^{+}(p)-\mu^+)t}
\hat{\varphi }^{++}(p)dp.
\end{align*}
Since $\hat{\varphi}^{--}, \hat{\varphi}^{+-},
\hat{\varphi}^{++}$ are integrable we have
using \eqref{flessmu} and dominated convergence theorem that these three terms go to $0$.

Let us introduce for further simplicity of notations the following functions
\begin{align*}
h_1(p)&:=\mu^+-2f^+(p)\geq 0, \\
h_2(p)&:=\mu^--2f^-(p)\geq 0, \\
h_3(p)&:=f^+(p)+f^-(p)<0, \\
h_4(p)&:=\mu^--f^+(p)-f^-(p)\geq 0.
\end{align*}
These inequalities are followed from~\eqref{def_mu},
\eqref{def_f} and \eqref{flessmu} as well
as the fact that equalities are possible
only at $p=0$.

Consider also the following two functions $g_1$ and $g_2$
\begin{align*}
g_1(p)&=f^-(p)-f^+(p), \\
g_2(p)&=\mu^--2f^+(p).
\end{align*}
They can be equal zero on a~set of non-zero
measure.

We have in the new notations:
\begin{align*}
\widehat{U_{t}}^{{--}}(p)&=2c^-\la^-\hat{a}^-(p)e^{2f^-(p)t}\int%
\limits_0^te^{h_2(p)\tau}d\tau,
\\
\widehat{U_{t}}^{{+-}}(p)&=c^-\la\hat{a}(p)e^{h_3(p)t}\int%
\limits_0^te^{h_4(p)\tau}d\tau \\
&\quad+\la\hat{a}(p)\hat{\varphi}^{--}(p)e^{h_3(p)t}\int_0^te^{g_1(p)%
\tau}d\tau \\
&\quad+2c^-\la\hat{a}(p)\la^-\hat{a}^-(p)e^{h_3(p)t}\int%
\limits_0^te^{g_1(p)\tau}\int_0^\tau e^{h_2(p)s}dsd\tau,
\\
\widehat{U_{t}}^{{++}}(p)&=2c^+\la^+\hat{a}^+(p)e^{2f^+(p)t}\int%
\limits_0^te^{h_1(p)\tau}d\tau \\
&\quad+\frac{2\la c^-\la^+\hat{a}^+(p)}{\mu^--\mu^+}e^{2f^+(p)t}\left(%
\int_0^te^{g_2(p)\tau}d\tau-\int_0^te^{h_1(p)\tau}d\tau\right)
\\
&\quad+2\la\hat{a}(p)\hat{\varphi}^{+-}(p)e^{2f^+(p)t}\int%
\limits_0^te^{g_1(p)\tau}d\tau \\
&\quad+2c^-\la^2\hat{a}^2(p)e^{2f^+(p)t}\int_0^te^{g_1(p)\tau}\int%
\limits_0^{\tau}e^{h_4(p)s}dsd\tau \\
&\quad+2\la^2\hat{a}^2(p)\hat{\varphi}^{+-}(p)e^{2f^+(p)t}\int%
\limits_0^te^{g_1(p)\tau}\int_0^{\tau}e^{g_1(p)s}dsd\tau \\
&\quad+4c^-\la^-\hat{a}^-(p)\la^2\hat{a}^2(p)e^{2f^+(p)t}\int%
\limits_0^te^{g_1(p)\tau}\int_0^{\tau}e^{g_1(p)s}\int%
\limits_0^se^{h_2(p)\theta}d\theta dsd\tau.
\end{align*}

Let us consider the following closed set $\D=\D_1\cup\D_2$,
where $\D_1:=\{p:\ g_1(p)=0\}=\D[+]_{f^-},\ \mathfrak{D}_2=\{p:\ g_2(p)=0\}$.
It's easy to see that $\D_1\cap\D_2=\varnothing$.
Indeed, by~\eqref{flessmu} for any $p\in\D_1\cap\D_2$
\[
\mu^-=2f^+(p)=2f^-(p)\leq 2\mu^-.
\]
But $\mu^-\leq0$, hence, it should be equality
that implies
$f^-(p)=\mu^-$, and with necessity $p=0$.
But if $0\in\D_1\cap\D_2$, then $f^+(0)=f^-(0)$,
i.e., $\mu^+=\mu^-$, that contradicts to the condition of the theorem.

Next we note that the functions  $\widehat{U_{t}}^{{+-}}(p)$ and $%
\widehat{U_{t}}^{{++}}(p)$ have different
explicit expressions for $p\in\D$ and for
$p\in\DC:=\R^d\setminus\D$. Note also that these functions are continuous functions of $p$ as compositions of the integrals of
the continuous functions of $t$ with continuous dependence on a parameter $p$.
Hence, for calculate these expressions for $p\in\D$ we may calculate their for $p\in\DC$
and take limits
as $\dist(p,\D)\to 0$.

By direct calculations for any $p\in\DC\setminus\{0\}$
we obtain
\begin{align*}
\widehat{U}_{t}^{--}(p)&=2\lambda
^{-}c^{-}\hat{a}^{-}(p)\frac{e^{\mu ^{-}t}-e^{2f^{-}(p)t}}{\mu
^{-}-2f^{-}(p)},
\\
\widehat{U}_{t}^{+-}(p)&= \lambda c^{-}\hat{a}(p)\cdot \frac{\mu ^{-}+2}{\mu
^{-}-2f^{-}(p)}\cdot \frac{e^{\mu ^{-}t}-e^{[f^{+}(p)+f^{-}(p)]t}}{\mu
^{-}-[f^{+}(p)+f^{-}(p)]} \\
& \quad+\left( \lambda \hat{a}(p)\hat{\varphi}^{--}(p)-\frac{2c^{-}\lambda
\lambda ^{-}\hat{a}(p)\hat{a}^{-}(p)}{\mu ^{-}-2f^{-}(p)}\right)
G_{t}^{(1)}(p) e^{2f^{-}(p)t},
\\
\widehat{U}_{t}^{++}(p)&=\left(\frac{2\lambda c^{-}\lambda^{+}\hat{a}^{+}(p)%
}{\mu^{-}-\mu^{+}}+\frac{2c^{-}\lambda^{2}\hat{a}^{2}(p)}{%
\mu^{-}-f^{+}(p)-f^{-}(p)}\cdot\frac{\mu^{-}+2}{\mu^{-}-2f^{-}(p)}%
\right)G_t^{(2)}(p) e^{2f^{+}(p)t} \\
&\quad+2c^{+}\lambda^{+}\hat{a}^{+}(p)\cdot\frac{\mu^{-}-\mu^{+}+\lambda}{%
\mu^{-}-\mu^{+}}\cdot\frac{e^{\mu^{+}t}-e^{2f^{+}(p)t}}{\mu^{+}-2f^{+}(p)} \\
&\quad+\left(\lambda^{2}\hat{a}^{2}(p)\hat{\varphi}^{--}(p)-\frac{%
2c^{-}\lambda^{-}\hat{a}^{-}(p)\lambda^{2}\hat{a}^{2}(p)}{\mu^{-}-2f^{-}(p)}%
\right)\left(G_t^{(1^{})}(p)\right)^{2} e^{2f^{-}(p)t} \\
&\quad+\left(2\lambda\hat{a}(p)\hat{\varphi}^{+-}(p)-\frac{2c^{-}\lambda^{2}\hat{a%
}^{2}(p)}{\mu^{-}-f^{+}(p)-f^{-}(p)}\cdot\frac{\mu^{-}+2}{\mu^{-}-2f^{-}(p)}%
\right)\\ &\qquad \times G_t^{(1^{})}(p) e^{[f^{+}(p)+f^{-}(p)]t},
\end{align*}
where we denote objects which are not defined
for $p\in\D$ by
\begin{align*}
&&G_t^{(1)}(p)&=\frac{e^{[f^+(p)-f^-(p)]t}-1}{f^+(p)-f^-(p)},
&p&\in\DC_1:=\X\setminus\D_1,\\
&&G_t^{(2)}(p)&=\frac{e^{[\mu^--2f^+(p)]t}-1}{\mu^--2f^+(p)},
&p&\in\DC_2:=\X\setminus\D_2.
\end{align*}

Obviously $\dist(p,\D_1)\to 0$ implies $g_1(p)\to0$
and, hence, $G_{t}^{(1)}(p)\to t$. In the
same manner $\dist(p,\D_2)\to 0$ provides  $G_{t}^{(2)}(p)\to t$. Therefore, for obtain
the explicit expressions for $\widehat{U_{t}}^{{+-}}(p)$ and $\widehat{U_{t}}^{{++}}(p)$ on $\D\setminus\{0\}$
it's enough to define
\[
G_t^{(1)}(p):=t, \quad
p\in\D_1; \qquad G_t^{(2)}(p):=t, \quad
p\in\D_2.
\]

Then we have for any $b\in L^1(\X)\cap L^\infty(\X)$
\begin{equation*}
\vert b(p) \vert G_{t}^{(1)}(p) e^{f^{-}(p)t}\leq
\begin{cases}
\vert b(p)\vert \dfrac{e^{f^+(p)t}-e^{f^-(p)t}}%
{f^+(p)-f^-(p)},\qquad p\in\DC_1\setminus\{0\},\\
\vert b(p)\vert\dfrac{ e^{-1}}{-2f^-(p)}, \qquad\qquad\quad\, p\in\D_1.
\end{cases}
\end{equation*}
And by result and proof of Lemma~\ref{boundint}
this function has integrable majorante (which
doesn't depend on $t$) on
whole $\X$.
Note also that $e^{f^\pm(p)t}\leq 1$, hence,
all terms with $G_t^{(1)}$ have this property.

Next,
\begin{equation*}
\vert b(p) \vert G_{t}^{(2)}(p) e^{2f^{+}(p)t}\leq
\begin{cases}
\vert b(p)\vert \dfrac{e^{\mu^-t}- e^{2f^{+}(p)t}}{\mu^--2f^+(p)}, \quad p\in\DC_2\setminus\{0\},\\
\vert b(p)\vert \dfrac{e^{-1}}{-2f^+(p)},\qquad\quad\
\,\,
p\in\D_2.
\end{cases}
\end{equation*}
If $\mu^-<0$ then may apply the previous
considerations ($\mu^-\in C^-$). Otherwise, we may use that
a function $u(t)=\dfrac{1-e^{at}}{-a}$ ($a<0$)
is increasing and, hence, bounded by $u(+\infty)=-
\dfrac{1}{a}.$

Note also that other numerators depended on $t$ in the expressions for $\widehat{U}_t^{--}$,
$\widehat{U}_t^{+-}$, $\widehat{U}_t^{++}$
may be estimated by $2$ (recall that corresponding
denominators are not equal to $0$ if $p\neq0$).

Therefore, for prove that functions $\widehat{U}_t^{--}$,
$\widehat{U}_t^{+-}$, $\widehat{U}_t^{++}$
have integrable majorants it's enough to
show that all terms which independent on $t$ are
integrable.
 Recall that  $\hat{\varphi}^{--}$, $\hat{%
\varphi}^{+-}$ and $\hat{\varphi}^{++}$ are bounded, $\hat{a%
},\ \hat{a}^{+}$ and $\hat{a}^{-}$ are bounded and integrable. Thus, we should prove integrability
of two terms:
\begin{equation}\label{lala}
\frac{b(p)}{\mu^{\pm}-2f^{\pm}(p)}
\quad \text{and}\quad  \frac{%
b(p)}{\mu^{-}-f^{-}(p)-f^{+}(p)}\cdot
\frac{1}{\mu^{-}-2f^{-}(p)},
\end{equation}
where $b\in L^1(\X)\cap L^\infty(\X)$.

If $\mu^{\pm}=0$ then we have
\begin{align*}
\frac{b(p)}{\mu^{\pm}-2f^{\pm}(p)}=-\frac{1}{2}\frac{b(p)}{\hat{a}^{\pm}(p)-1}
\end{align*}
and due to Lemma~\ref{int_a} these functions
are integrable.  If $\mu^{\pm}< 0$ then using
\eqref{flessmu} we obtain
\begin{align*}
0<-\mu^{\pm} \leq  \mu^{\pm}-2f^{\pm}(p),
\end{align*}
that implies
\[
\frac{\vert b(p)\vert}{\mu^{\pm}-2f^{\pm}(p)} \leq\frac{\vert
b(p)\vert}{-\mu^\pm}
\]
which are also integrable functions.

Next, if $\mu^{-}=0$ then $\mu^{+}<0$ and
using \eqref{flessmu}
\[
(\mu^{-}-f^{-}(p)-f^{+}(p))(\mu^{-}-2f^{-}(p))\geq
-2\mu^+(1-\hat{a}^{-}(p)),
\]
and we again may use Lemma~\ref{int_a}.
Finally, if $\mu^{-}<0$ then $\mu^+=0$ and
\[
\bigl( (\mu^{-}-f^{-}(p))+(-f^{+}(p))\bigr) \cdot
\bigl(\mu^{-}-2f^{-}(p)\bigr) \geq -\mu^-
(1-\hat{a}^{+}(p)),
\]
and we also may use Lemma~\ref{int_a}.

As a result, the functions $\widehat{U}_t^{--}$,
$\widehat{U}_t^{+-}$, $\widehat{U}_t^{++}$ have integrable majorants
and by dominated convergence theorem for obtain limits of
${U}_t^{--}$, ${U}_t^{+-}$, ${U}_t^{++}$ as $t\to\infty$ we may
calculate limits of the Fourier transforms and after apply the
inverse Fourier transforms. Hence, taking $t\to\infty$ in the
expressions for $\widehat{U}_t^{--}$, $\widehat{U}_t^{+-}$,
$\widehat{U}_t^{++}$ we immediately obtain the statement of the
Theorem~\ref{thm_as_2} with functions $\Omega ^{++},~\Xi ^{--},~\Xi
^{+-},~\Xi ^{++}$ which are inverse Fourier transforms of
\eqref{a1}--\eqref{a4}.

\subsection*{Acknowledgments}
The authors acknowledge the financial support of the DFG through SFB
701 ``Spectral structures and topological methods in mathematics'',
Bielefeld University.

\end{document}